\documentclass[journal, onecolumn]{IEEEtran}

\usepackage{amsthm,amsmath,amssymb,bm,cite,algorithm,algpseudocode,float,color,tikz}
\usetikzlibrary{positioning,arrows.meta,fit}
\usepackage{optidef}
\usepackage{booktabs}
\usepackage{bbm}

\newtheorem{theorem}{Theorem}
\newtheorem{lemma}{Lemma}
\newtheorem{remark}{Remark}

\title{Age of Incorrect Information for Pull-Based State Estimation of General Markov Sources}
\author{Marco Zanni, Mohamad Assaad, and Touraj Soleymani
\thanks{
M.~Zanni and M.~Assaad are with the CentraleSup\'{e}lec, University of Paris-Saclay, France (e-mails: {\tt\footnotesize marco.zanni@centralesupelec.fr} and {\tt\footnotesize mohamad.assaad@centralesupelec.fr}). T.~Soleymani is with the University of London, United Kingdom (e-mail: {\tt\footnotesize touraj.soleymani@citystgeorges.ac.uk}). }%
}

\begin{document}
\bstctlcite{BSTcontrol}

\maketitle

\begin{abstract}
We study pull-based remote state estimation of an arbitrary, multi-state Markov source while accounting for both freshness and correctness attributes of information. To that end, we formulate a discounted optimization problem in terms of the age of incorrect information (AoII), and express it as a joint source-AoII belief Markov decision process (MDP) under maximum a posteriori (MAP) estimation. We then exploit the information structure of the model and prove that every reachable belief is represented by the last successfully observed source state and the number of time slots elapsed since that observation. For numerical computation, we truncate the elapsed no-success duration at a finite level and derive an explicit error bound and a criterion for selecting the truncation parameter. For reliable links, we show that an optimal policy can be represented by a look-up table of waiting times. For unreliable links, we propose a persistent policy and derive computable performance bounds. We also show that the MAP estimate stabilizes after a finite number of time slots. To further reduce memory requirements, we introduce a hybrid estimator with an early stationary switch and derive a computable bound on the resulting difference in performance. Finally, we extend the framework to multiple sources, formulate the scheduling problem as a restless multi-armed bandit, establish a sufficient condition for indexability, and develop an approximate Whittle index policy based on interpolation. Our numerical results illustrate the structure of the optimal single-source policy, evaluate the performance of the multi-source policies, and verify that the proposed heuristic policies closely approach the optimal solution while substantially reducing computational efforts.
\end{abstract}

\begin{IEEEkeywords}
Age of incorrect information, maximum-a-posteriori estimation, real-time systems, status updating, semantic communication.
\end{IEEEkeywords}

\section{Introduction}\label{sec:introduction}
Remote state estimation is a fundamental component of networked cyber-physical systems, such as those in industrial automation, intelligent transportation, and smart grids. In these applications, a sensor observes a stochastic process and communicates status updates to a remote monitor, which should approximately reconstruct the current state based on the information communicated over a possibly imperfect link. Note that, as the exchanged information needs to be used in a downstream real-time decision-making task related to physical systems, the remote state estimation performance must account for not only correctness but also freshness. The age of information (AoI) is a common metric for quantifying information freshness in status-update systems \cite{kaul2012realtime, yates2019age, yates2021survey}. Let $u_t$ be the generation time of the freshest update that has been successfully delivered to the receiver by time $t$. The AoI is defined as the time elapsed since the generation of the latest available update, i.e.,
\begin{equation}
    A_t = t-u_t.
    \label{eq:intro_aoi}
\end{equation}
The AoI has been extensively studied in queueing, sampling, scheduling, remote-estimation, and integrated sensing and communication systems \cite{sun2017update, sun2020sampling, ornee2021sampling, zanni2026aoiisac}. However, this metric does not directly account for the dynamics of the monitored process or for the correctness of the estimate. An old update may remain fully accurate if the source has not changed, whereas a recently generated update may already be incorrect after a source transition. This limitation has motivated a broad family of freshness metrics, including query AoI, age penalties, binary freshness, uncertainty of information, and estimation costs \cite{chiariotti2022query, champati2022detecting, akar2024query, chen2022uoi, chen2024indexuoi}. The age of incorrect information (AoII) was introduced in \cite{maatouk2020age} to combine information freshness with estimation correctness. Let $X_t$ be the source state and let $\hat X_t$ be the estimate available at the monitor. Assuming that the source and the monitor have been synchronized at least once, the AoII can be written as
\begin{equation}
    \Delta_t
    = t-\max\big\{\tau\in\{0,\ldots,t\}:X_\tau=\hat X_\tau\big\}.
    \label{eq:intro_aoii}
\end{equation}
Equivalently, $\Delta_t$ is zero whenever $X_t=\hat X_t$ and increases by one for every consecutive time slot in which the estimate remains incorrect. Unlike the AoI, the AoII may therefore return to zero without a new transmission if the source evolves back to the state currently estimated by the monitor. This dependence on both the source trajectory and the duration of the estimation error makes the AoII a more comprehensive metric for cyber-physical systems \cite{uysal2022semantic, saha2022relationship, luo2025consecutive}.

\subsection{Different Architectures and Related Work}
Status-updating problems based on the AoII have been studied primarily for Markov sources. For discrete-time sources, constrained Markov decision process (MDP) formulations have been used to characterize the optimal policy structure. In particular, it has been shown that for binary, symmetric, or special Markov sources, the optimal policy is threshold \cite{maatouk2020age, maatouk2023enabler, kriouile2021tracking, kriouile2023pull}. Subsequent works have incorporated unreliable channels, random transmission delays, retransmissions, and hybrid automatic repeat request mechanisms \cite{chen2024randomdelay, bountrogiannis2025harq}. General functions of the AoII have been considered in push remote estimation through threshold policies, see \cite{cosandal2025functions}. These studies show that the optimal update rule generally depends on the source dynamics, the communication cost or constraint, and the duration of the current mismatch. Nevertheless, for general asymmetric sources, a single AoII threshold need not be optimal. A substantial part of the existing literature (see \cite{maatouk2020age,maatouk2023enabler,chen2024randomdelay,zakeri2025pomdp,cocco2023remote}) relies on binary sources or on symmetric Markov processes. In contrast, we consider a general Markov source with an arbitrary transition matrix, without imposing binary or symmetry assumptions.

A parallel line of work considers continuous-time Markov sources. Analytical models based on synchronization cycles and absorbing Markov chains have been developed for push-pull sampling \cite{cosandal2024modeling,akar2024query}. More recently, the policy minimizing the AoII for a general continuous-time Markov source has been shown to have a threshold structure, with thresholds depending on both the source and estimation states, and the corresponding constrained problem has been formulated as a semi-Markov decision process \cite{cosandal2025multithreshold}. Related remote state estimation studies have investigated detection of state transitions, state reconstructions, nonlinear costs for persistent errors, and scheduling under communication constraints \cite{champati2022detecting, luo2025consecutive}.

Most of the above AoII formulations assume that the decision maker knows the current source state, the current estimation error, or the instantaneous AoII. In particular, the AoII is often treated as a state variable directly observed by the scheduler. Such information is naturally available in a push-based architecture, where the sensor observes the source and can decide whether to transmit, but it is generally unavailable in a pull-based architecture, where the monitor must decide whether to request an update before observing the source. In a pull-based architecture, however, the update decision is made under partial information, and the monitor must infer the source and the estimation cost from its past observations. Partially-observable Markov decision processes (POMDPs) provide the standard framework for such problems: the observation history is summarized by a belief state, which evolves as a controlled Markov process \cite{smallwood1973optimal, kaelbling1998planning}. Such formulations have consequently appeared in partially-observed status update problems \cite{shao2022partially, zakeri2025pomdp, cocco2023remote}. 

Pull architectures have also been studied from a task-oriented perspective. In \cite{talli2025pragmatic}, communication and control policies are jointly designed for a Markov process over a costly channel. However, the objective is the trade-off between control reward and communication cost, and does not involve age metrics. \cite{liyanaarachchi2026age} optimizes job submission decisions for a Markov machine using the state and age of the available estimate, while the sampling process is given exogenously. These formulations differ from the present problem, where the pull decision itself is optimized to control the freshness and correctness of the estimate.

In the present paper, we study pull-based remote state estimation of a Markov source under the AoII metric. The monitor observes the source only through possibly unsuccessful transmissions. A work that is closely related to our study is \cite{cosandal2025joint}, which considers pull-based remote state estimation of a general discrete-time Markov source when the monitor cannot observe the instantaneous AoII, introduces the joint conditional distribution of the source state and AoII as a sufficient statistic, employs a maximum a posteriori (MAP) estimation, and writes the corresponding belief-MDP optimality equations. The formulation in \cite{cosandal2025joint} is the starting point of the present paper, but our objective and treatment are different. While \cite{cosandal2025joint} establishes belief-MDP formulation, and provides heuristic method based on reinforcement learning. Our goal is to turn this partially-observed model into an analytically and computationally tractable framework using finite approximations, developing novel efficient policies, and providing rigorous analysis of performance bounds. Unlike \cite{cosandal2025joint}, we formulate the MDP directly on the set of beliefs reachable from the synchronized initial condition, rather than on the entire probability simplex, and we modify the timing convention within a time slot. We propose multiple heuristic solutions to the problem, providing bounds on their suboptimality. A related multi-sensor formulation is considered in \cite{cosandal2025sensor}, where a successful update reveals only a partial state, and the resulting belief remains a continuous variable; in contrast, in our model, a successful update reveals the complete source state, and we obtain an exact discrete reduction of the MDP.

Recent works have also investigated the structure of estimators in pull architectures, see \cite{liyanaarachchi2026multistage,liyanaarachchi2026beyond}. These works study the approximation of the MAP estimator through piecewise constant functions. Our setting differs in that we consider a different kind of estimator to approximate the MAP.

A generic POMDP is commonly described over the whole probability simplex, which is uncountable even when the underlying hidden state space is finite. In our setting, instead, the monitor is initialized in a synchronized state, and each action produces only a finite number of possible observations. Consequently, only finitely many beliefs can be generated, and the union over all depths is a countable set of states. We reformulate an MDP directly on this set of states rather than on the complete continuous probability simplex. This distinction is essential: the finite computational model is built by exploring beliefs that can actually occur under admissible observation histories, without imposing an arbitrary discretization over the simplex. The use of MAP estimation is also important in this setting. Retaining the last received source state as the estimate is convenient, but it ignores the evolution of a Markov source between updates. Under a MAP estimator, the estimate is instead determined by the current belief and may change even in the absence of a new packet. Recent work (see \cite{zakeri2025pomdp}) has similarly shown that, for Markov sources, the age and content of the latest observation jointly determine the usefulness of outdated information and the structure of the optimal policy. In our problem, this dependence is exactly the reason why a source-state marginal distribution alone is not sufficient: the monitor must keep the joint distribution of the source state and the elapsed duration of incorrect estimation.

\subsection{Main Contributions and Organization}
In this paper, we study pull-based remote state estimation where a remote monitor tracks a general Markov source over a possibly unreliable link. The underlying information pattern is realistic but challenging: at the time of each request, the monitor observes neither the true source state nor the instantaneous AoII. Moreover, a transmitted update may fail. Therefore, unlike age problems in which the current age is directly available to the decision maker, the quantity to be minimized must itself be inferred from the observation history through a state-AoII belief. Our objective is to exploit the information structure, obtain an exact reduced representation, and develop computationally tractable and simplified scheduling rules with explicit performance guarantees. Our main contributions are summarized as follows.

\begin{itemize}
    \item We derive the exact recursion for the joint source-AoII belief and formulate the decision-making problem as a belief-MDP on the set of beliefs reachable from the synchronized initial condition. We then prove that every reachable prior belief is determined by two observable quantities: the last successfully observed source state and the elapsed time since that observation.

    \item We develop a finite approximation by truncating only the elapsed no-success duration at a level $H$. The exact costs and transitions are preserved inside the truncated region, while a null transition from the boundary is assigned zero continuation value. We derive an explicit upper bound on the resulting approximation error and a direct rule for selecting $H$ for a prescribed tolerance. 
    
    \item We exploit the regenerative structure created by successful transmissions to obtain efficient policies and measure their performance certificates. For a reliable link, we show that an optimal look-up-table policy described by waiting times. For an unreliable link, we introduce a persistent policy and derive rigorously computable regenerative upper and lower bounds that certify its suboptimality gap. We then introduce a hybrid estimator with an early stationary switch that considers the stationary distribution after a prescribed time and derive a computable bound on the resulting difference in optimal value.

    \item We extend the reduced-state framework to multiple independent sources sharing a limited number of transmission resources. The scheduling problem is formulated as a restless multi-armed bandit, following the index methodology used in standard literature, (see, e.g., \cite{whittle1988restless, chen2022uoi, chen2024indexuoi, zanni2026aoiisac}). We establish a sufficient condition for indexability, describe the computation of Whittle indices on the finite reduced state spaces, and propose an approximate Whittle index policy based on interpolation from a small collection of anchor states.

    \item We finally provide extensive numerical results that illustrate the structure of the optimal policy over the reduced state space in the single-source setting, compare the persistent policy with the optimal solution, and evaluate the exact and approximate index policies in the multi-source setting. The results show that the proposed efficient policies can closely approach their optimal counterparts while substantially reducing offline computation and online storage requirements.
\end{itemize}
The rest of the paper is organized as follows. The networked system model and the main optimization problem are introduced in Sections \ref{sec:system_model}--\ref{sec:problem_formulation}. Section \ref{sec:model_reduction} establishes the exact reduced representation, develops the finite truncation of the state space, and derives its error bound. Section \ref{sec:reliable_link} studies the reliable link case and develops a look-up-table policy, while Section \ref{sec:unreliable_persistent_policy} tackles the unreliable link case and develops the persistent policy and its performance certificate. Section \ref{sec:hybrid_estimator} proves MAP stabilization and develops the hybrid estimator. Section \ref{sec:multi_source} addresses the multi-source formulation and the index policies. Section \ref{sec:numerical_results} presents numerical results. Finally, Section \ref{sec:conclusion} concludes the paper.

\section{Networked System Model}\label{sec:system_model}
In this section, we introduce the Markov source, the pull-based communication model, and the within-slot sequence of estimation and source evolution. We then define the AoII dynamics and construct the joint source-AoII belief and its MAP-based update equations. This model specifies the information available to the monitor when each pull decision is made.

\subsection{Source and Channel Models}
In our system, time is divided into slots indexed by $t\in\mathbb Z_{\geq 0}$. A sensor observes a Markov process $\{X_t\}$ with state space $\mathcal X = \{1,\dots,N\}$ and transition matrix $\bm P$. The process remains in state $X_t$ during time slot $t$ and transitions to the next state at the end of the slot. At the beginning of time slot $t$, the monitor chooses an action $a_t\in\{0,1\}$. The action $a_t=1$ means that the monitor sends a pull request, while $a_t=0$ means that no request is sent. When a request is sent, it reaches the source instantaneously, and the sensor samples the current state $X_t$ and attempts to deliver it to the monitor. The transmission is successful with probability $s\in[0,1]$, independently of the source evolution. A successfully transmitted packet becomes available to the monitor within the same time slot. 

The observation available to the monitor at time slot $t$ is denoted by $o_t\in\mathcal O(a_t)$, where
$\mathcal O(0)=\{\varnothing\}$ and
$\mathcal O(1)=\{\varnothing,1,2,\dots,N\}$. More precisely,
\begin{align}
    o_t=
    \begin{cases}
        X_t, & a_t=1 \text{ and transmission is successful},\\
        \varnothing, & \text{otherwise}.
    \end{cases}
    \label{eq:obs}
\end{align}
Thus, the null observation $\varnothing$ represents either an idle slot or an unsuccessful transmission attempt.

\begin{figure}[t!]
\centering
\begin{tikzpicture}[
    font=\footnotesize,
    >=stealth,
    proc/.style={
        draw,
        rounded corners,
        thick,
        align=center,
        minimum width=1.65cm,
        minimum height=0.85cm
    },
    chan/.style={
        draw,
        thick,
        align=center,
        minimum width=1.35cm,
        minimum height=0.6cm
    },
    arr/.style={->, thick}
]

\node[proc] (mon) at (0,0) {Monitor\\$\hat X_t$};
\node[proc] (src) at (5.4,0) {Source\\$X_t$};
\node[chan] (ch) at (2.7,-1.05) {channel\\$s$};

\draw[arr] (mon.north east) .. controls (1.4,0.85) and (4.0,0.85) ..
    node[above] {$a_t=1$} (src.north west);

\draw[arr] (src.south west) -- node[above,sloped] {$X_t$} (ch.east);
\draw[arr] (ch.west) -- node[above,sloped] {$o_t$} (mon.south east);

\node[align=center] at (2.7,-2.05)
{\scriptsize $b_t \;\rightarrow\; a_t \;\rightarrow\; o_t \;\rightarrow\; \hat X_t^{+},\,\Delta_t^{+} \;\rightarrow\; P \;\rightarrow\; b_{t+1}$};

\end{tikzpicture}
\caption{Information flow and update order within one decision slot.}
\label{fig:sys}
\end{figure}

\subsection{Age of Incorrect Information}
The AoII component takes values in $\mathbb Z_{\geq0}$. Before the observation for time slot $t$ is received, the monitor uses the estimate $\hat X_t^-=\hat{x}(b_t)$. Once $o_t$ is available, the estimate for the current slot becomes
\begin{align}
\hat X_t^+ =
\begin{cases}
    k, & o_t=k,\ k\in\{1,\dots,N\},\\
    \hat{x}(b_t), & o_t=\varnothing.
\end{cases}
\label{eq:post_obs_estimator}
\end{align}
To make the within-slot timing explicit, we distinguish the AoII immediately before and after the current observation. Let $\Delta_t^-$ be the prior AoII (immediately before the observation) for time slot $t$, and let $\Delta_t^+$ be the posterior AoII (immediately after the observation and the corresponding estimate update). Since the cost of time slot $t$ is evaluated after this update, its instantaneous AoII contribution is $\Delta_t^+$. The posterior AoII satisfies
\begin{align}
\Delta_t^+ =
\begin{cases}
    0, & X_t=\hat X_t^+,\\
    \Delta_t^-, & X_t\neq \hat X_t^+.
\end{cases}
\label{eq:post_obs_aoii}
\end{align}
After the source transition, the AoII at the beginning of the next time slot evolves as
\begin{align}
\Delta_{t+1}^- =
\begin{cases}
    0, & X_{t+1}=\hat X_{t+1}^-,\\
    \Delta_t^+ +1, & X_{t+1}\neq \hat X_{t+1}^-.
\end{cases}
\label{eq:aoii}
\end{align}
Therefore, synchronization between the source and the monitor resets the AoII to zero; otherwise, the AoII increases by one.

\subsection{Joint Beliefs and Maximum-A-Posteriori Estimates}
The sequence of operations within a time slot is as follows. At each time, the monitor holds the prior and posterior beliefs $b_t$ and $\tilde{b}_t$ about the source state and the AoII. Before choosing $a_t$ and receiving $o_t$, the monitor holds the prior belief $b_t$. The action is then selected and the corresponding observation is revealed. Based on this observation, the monitor updates its posterior belief. The source subsequently evolves according to $\bm P$, yielding the prior belief $b_{t+1}$ at the beginning of the next time slot. The controlled process starts immediately after a synchronization event.

For a belief $b$ over $\{1,\dots,N\}\times\mathbb Z_{\geq0}$, define
\begin{align}
\pi_b(i)=\sum_{\Delta=0}^{\infty} b(i,\Delta), \qquad
\hat{x}(b)=\arg\max_{i\in\{1,\dots,N\}} \pi_b(i),
\label{eq:map_belief}
\end{align}
where $\pi_b(i)$ indicates the source-state marginal distribution associated with $b$. Since the source state is revealed only after a successful transmission, the monitor generally cannot observe either $X_t$ or $\Delta_t^-$. It therefore maintains their joint conditional distribution,
\begin{align}
b_t(i,\Delta)=\mathbb P(X_t=i,\Delta_t^-=\Delta\mid H_t),
\label{eq:b}
\end{align}
where $H_t$ collects all observations and actions available before $a_t$ is chosen. Before time slot $0$, a successful update has revealed the current source state to the monitor. Therefore, for $i_0\in\{1,\dots,N\}$, the initial prior belief is
\begin{align}
b_0(i,\Delta)=\mathbbm{1}\{i=i_0,\Delta=0\}.
\label{eq:synchronized_initial_belief}
\end{align}
Equivalently, $X_0=\hat X_0^-=i_0$ and $\Delta_0^-=0$. This assumption identifies the class of beliefs that can be reached from the initial condition and may be interpreted as an initial synchronization step.

For a prior belief $b_t$ and a realized observation $o_t$, denote by $\tilde b_t$ the corresponding posterior belief:
\[
\tilde b_t(i,\Delta)
=
\mathbb P(X_t=i,\Delta_t^+=\Delta\mid H_t,a_t,o_t).
\]
For convenience, define the reset map
\begin{align}
r_{\rm reset}(i,\delta,\hat x)=
\begin{cases}
    0, & i=\hat x,\\
    \delta, & i\neq \hat x.
\end{cases}
\label{eq:R}
\end{align}
Using \eqref{eq:obs} and \eqref{eq:post_obs_aoii}, for any $k\in\{1,\dots,N\}$, the posterior belief is determined by the map $U$:
\begin{align}
&\tilde b_t(i,\Delta) =U(b_t,o_t)(i,\Delta) \notag\\
&=\begin{cases}
\displaystyle \sum_{\delta=0}^{\infty} b_t(i,\delta) \mathbbm{1}\!\left\{ \Delta={}r_{\rm reset}(i,\delta,\hat x(b_t))\right\}, & o_t=\varnothing,\\[4\jot]
\mathbbm{1}\{i=k,\Delta=0\}, & o_t=k.
\end{cases}
\label{eq:hat_b_t}
\end{align}
When $o_t=\varnothing$, the monitor receives no new information about $X_t$ and keeps the estimate $\hat x(b_t)$. The AoII is therefore reset only for source states equal to that estimate. For beliefs reachable under \eqref{eq:aoii}, this operation satisfies
$U(b_t,\varnothing)=b_t$, since any mass associated with
$X_t=\hat x(b_t)$ already has zero AoII. If instead $o_t=k$, the monitor learns that $X_t=k$, updates its estimate to $k$, and assigns zero posterior AoII.

After this update, the source undergoes a transition according to $\bm P$. Given $\tilde b_t$, define the MAP estimate for the next time slot as
\begin{align}
\hat{x}_{\rm next}(\tilde b_t) =
\arg\max_{i\in\{1,\dots,N\}} \sum_{m=1}^{N}\sum_{\delta=0}^{\infty} \tilde b_t(m,\delta)P_{mi}.
\label{eq:pred_map}
\end{align}
Thus,
$\hat X_{t+1}^-=\hat{x}_{\rm next}(\tilde b_t)$.
The next prior belief is determined by the map $F$:
\begin{align}
&b_{t+1}(i,\Delta)=F(b_t,o_t)(i,\Delta) \notag\\
&=\sum_{m=1}^{N}\sum_{\delta=0}^{\infty}
\tilde b_t(m,\delta)P_{mi} \mathbbm{1}\!\left\{
\Delta=G(i,\delta, \hat{x}_{\rm next}(\tilde b_t))
\right\},
\label{eq:b_t}
\end{align}
where
\begin{align}
G(i,\delta,\hat x) =
\begin{cases}
    0, & i=\hat x,\\
    \delta+1, & i\neq \hat x.
\end{cases}
\label{eq:G}
\end{align}
Hence, \eqref{eq:b_t} first propagates the source distribution through $\bm P$, then determines the MAP estimate. Probability mass assigned to the estimated state is placed at zero AoII, whereas the AoII associated with all other states is incremented by one. Throughout the paper, ties in the MAP rules are resolved according to a fixed deterministic ordering.

For later use, let $\rho(o|b,a)$ be the probability of obtaining observation $o$ when action $a$ is selected at belief $b$. Under the channel model,
\begin{align}
\rho(k|b,1)&=s\pi_b(k), \qquad k\in\{1,\dots,N\}, \label{eq:rho_k}\\[1\jot]
\rho(\varnothing|b,1)&=1-s, \label{eq:rho_empty_1}\\[1\jot]
\rho(\varnothing|b,0)&=1, \label{eq:rho_empty_0}
\end{align}
with all remaining observation probabilities equal to zero. Since the next belief is fully determined by the current belief and the observation, $b_t$ summarises all relevant information contained in the history:
\begin{align}
\mathbb P(b_{t+1}|H_t,a_t,o_t)
=
\mathbb P(b_{t+1}|b_t, a_t, o_t).
\end{align}
The resulting transition probabilities between belief states are therefore
\begin{align}
T(b'|b,a)
&=
\mathbb P(b_{t+1}=b'|b_t=b,a_t=a)\\
&=
\sum_{o\in\mathcal O(a)}
\rho(o|b,a)\mathbbm{1}\{b'=F(b,o)\}.
\label{eq:Tb}
\end{align}

\section{Remote State Estimation Problem} \label{sec:problem_formulation}
In this section, we formulate the discounted tradeoff between AoII and pull-request costs. Using the joint source-AoII belief as the information state, we define the corresponding belief-MDP and derive its Bellman optimality equations. This formulation serves as the reference problem for the reduced-state analysis developed subsequently.

\subsection{Main Optimization Problem}
Since the AoII cost of time slot $t$ is evaluated after the observation, the instantaneous AoII cost is $\Delta_t^+$. For a history dependent admissible policy $\phi$, we define the discounted AoII cost as
\begin{align}
J_\gamma^\phi(b_0) =
\mathbb E_{b_0}^{\phi} \Big[ \sum_{t=0}^{\infty}\gamma^t\Delta_t^+ \Big],
\label{eq:discounted_aoii_cost}
\end{align}
where $\gamma\in(0,1)$ is the discount factor. Similarly, we define the discounted sampling usage as
\begin{align}
S_\gamma^\phi(b_0) =
\mathbb E_{b_0}^{\phi} \Big[ \sum_{t=0}^{\infty}\gamma^t a_t \Big].
\label{eq:discounted_sampling_cost}
\end{align}
We assign a cost $\lambda \geq 0$ to each pull request. The resulting penalized discounted optimization problem is stated as
\begin{align}
\underset{\phi}{\text{minimize}} \Big\{J_\gamma^\phi(b_0)+\lambda S_\gamma^\phi(b_0)\Big\},
    \label{Opt1}
\end{align}
where parameter $\lambda$ controls the tradeoff between the AoII and the transmission cost. The associated optimal value of the problem is
\begin{align}
V^\lambda(b_0) =
\inf_\phi \mathbb E_{b_0}^{\phi} \Big[
    \sum_{t=0}^{\infty}\gamma^t \left( \Delta_t^+ + \lambda a_t \right)\Big].
\label{eq:discounted_lagrangian_problem}
\end{align}

\subsection{Equivalent Markov Decision Process}
For a fixed $\lambda$, the problem yields a discounted belief-MDP. By varying $\lambda$, one obtains policies corresponding to different AoII-transmission tradeoffs. Let $s_t=(X_t,\Delta_t^-)\in\mathcal S$ denote the hidden physical state before the observation, where $\mathcal S=\{1,\dots,N\}\times\mathbb Z_{\geq0}$. This state is not available to the monitor. Moreover, because of the MAP estimation, the transition law depends on the estimate, which is a deterministic function of the belief; hence, $(X_t,\Delta_t^-)$ alone is not a sufficient controlled Markov state. On the other hand, the update rule in \eqref{eq:b_t} induces a belief-MDP. Differently from the standard continuous belief simplex, here we explicitly restrict the state space to the belief states that are reachable from the initial belief.

Let $\mathcal B_0=\{b_0\}$ be the set containing only the initial belief. Recursively, for every $t\geq0$, define
\begin{align}
\mathcal B_{t+1} \!= \!\Big\{ F(b,o)\!:\! b\!\in\!\mathcal B_t, a\!\in\!\mathcal A, o\!\in\!\mathcal O(a), \rho(o|b,a)\!>\!0 \Big\}.
\label{eq:reachable_Bt}
\end{align}
The overall reachable belief space is then given by
\begin{align}
\mathcal B=\bigcup_{t\geq0}\mathcal B_t.
\label{eq:reachable_B}
\end{align}
At every finite depth, only finitely many beliefs are reachable. Hence, the overall reachable belief space is at most countable. For a fixed transmission cost $\lambda$, the discounted belief-MDP is defined as $(\mathcal B,\mathcal A,T,c^\lambda,\gamma)$, where
\begin{itemize}
    \item The state space is the reachable belief space $\mathcal B$ in \eqref{eq:reachable_B}.
    \item The action space is $\mathcal A=\{0,1\}$, and each action $a\in\mathcal A$ is feasible for every $b\in\mathcal B$.
    \item The transition probability $T(b'|b,a)$ is induced by the deterministic successor map $F$ and by the observation probabilities $\rho(o|b,a)$, as in \eqref{eq:Tb}.
    \item The cost function is the expected posterior AoII plus the sampling cost:
    \begin{equation}
    c^\lambda(b,a) = (1-as) \sum_{i=1}^{N} \sum_{\Delta=0}^{\infty} \Delta b(i,\Delta) + \lambda a.
    \label{eq:cost_function}
    \end{equation}
    \item $\gamma\in(0,1)$ is the discount factor.
\end{itemize}

Finally, the Bellman equation for the discounted penalized problem can be written on the reachable belief space as
\begin{align}
V^\lambda(b) &= \min_{a\in\mathcal A} \Big \{ c^\lambda(b,a) \notag\\
&\qquad \qquad + \gamma\sum_{o\in\mathcal O(a)}\rho(o|b,a) V^\lambda\!\left(F(b,o)\right)\Big\},
\label{eq:bell}
\end{align}
for $b\in\mathcal B$. Equivalently, by defining the action-value function,
\begin{align}
Q^\lambda(b,a) &= c^\lambda(b,a) + \gamma\sum_{o\in\mathcal O(a)}\rho(o|b,a)\nonumber\\
&\qquad \qquad \qquad \quad \times \min_{a'\in\mathcal A} Q^\lambda\!\left(F(b,o),a'\right),
\label{eq:q_bell}
\end{align}
the optimal action at belief $b$ is obtained as $a^*(b)\in\arg\min_{a\in\mathcal A}Q^\lambda(b,a)$.

The belief formulation above is exact, but it does not yet expose the particular structure of the beliefs that can actually occur. In the next section, we show that, under transmission failures and perfect state revelation upon a successful update, the belief is determined by only two observable quantities: the last observed source state and the elapsed time since that observation. This reduction is the basis of all subsequent analysis and numerical computation.

\section{Model Reduction Analysis}\label{sec:model_reduction}
The belief-MDP is defined on a set of probability distributions. This set has a much stronger structure than a generic POMDP belief space. A successful transmission reveals the current source state exactly and resets the AoII to zero. Between two successful transmissions, every observation is null. Since the channel success event is independent of the source state, a failed request and an idle time slot provide no information about the source. Consequently, after the latest successful update, the belief evolves deterministically until the next success. This property lets us reduce the original model to a simpler representation without loss of information. let the timeline be subdivided into no-success cycles, each one beginning immediately after a successful transmission and finishing at the following successful transmission. We formalize this reduction. We first show that every reachable belief is represented by the last observed state and the elapsed no-success duration. We then obtain an equivalent reduced-state MDP which contains the same information as the original one and allows further analyses. Its only unbounded coordinate is the elapsed duration, which leads to a finite approximation controlled by a single parameter. Finally, we derive an explicit bound on the approximation error and describe the finite model used for computation.

\subsection{Parameterization of Reachable Beliefs}
We first introduce the quantities associated with a no-success cycle. Let $e_i$ be the $i$-th canonical row vector and define the synchronized belief
\begin{equation}
b_i(x,\Delta)=\mathbbm{1}\{x=i,\Delta=0\}.
\label{eq:degenerate_synchronized_belief}
\end{equation}
Immediately after a successful transmission revealing state $i$, the belief at the first subsequent decision slot is
\begin{equation}
b_{i,1}:=F(b^i,i),
\label{eq:first_post_success_belief}
\end{equation}
and, as long as no further successful transmission is received, it evolves according to
\begin{equation}
b_{i,n+1}=F(b_{i,n},\varnothing),\qquad n\geq1.
\label{eq:belief_recursion}
\end{equation}
For every $i\in\mathcal X$ and $n\geq1$, define the source-state marginal distribution
\begin{equation}
p_i(n):=e_i\bm P^n,
\label{eq:reduced_source_marginal}
\end{equation}
the corresponding MAP estimate
\begin{equation}
\hat x_i(n)\in\arg\max_{j\in\mathcal X}(e_i\bm P^n)_j,
\label{eq:reduced_map_estimate}
\end{equation}
and the diagonal mismatch matrix
\begin{equation}
\bm D_i(n) :=
\operatorname{diag} \! \big(\mathbbm{1}\{1\neq \hat x_i(n)\}, \dots, \mathbbm{1}\{N\neq \hat x_i(n)\}\big).
\label{eq:reduced_mismatch_matrix}
\end{equation}
The expected posterior AoII in a no-success slot is
\[
g_i(n):=\sum_{x=1}^{N}\sum_{\Delta=0}^{\infty}\Delta b_{i,n}(x,\Delta),
\]
which admits the equivalent expression
\begin{equation}
g_i(n) =
\sum_{r=1}^{n}e_i\bm P^r\bm D_i(r)\bm P\bm D_i(r+1) \cdots \bm P\bm D_i(n)\mathbf 1,
\label{eq:reliable_g_closed_form}
\end{equation}
where the product after $\bm D_i(n)$ is omitted when $r=n$.

\begin{theorem}
\label{thm:exact_reduction}
Every prior belief reachable from the synchronized initial condition is either $b_0$ or is of the form $b_{i,n}$ for some $(i,n)\in\mathcal X\times\mathbb N_{\geq1}$. Consequently, the observable pair $(i,n)$ is a sufficient observable state in the belief-MDP problem. Moreover, we have
\begin{equation}
 0\leq g_i(n)\leq n.
 \label{eq:g_linear_bound}
\end{equation}
\end{theorem}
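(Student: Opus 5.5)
The proof is an induction on the depth $t$ in the recursive definition \eqref{eq:reachable_Bt}, combined with two structural facts already noted in Section~\ref{sec:system_model}.

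\textbf{Fact (a).} A successful observation $o=k$ maps any prior belief to the synchronized posterior $\mathbbm{1}\{i=k,\Delta=0\}$ through \eqref{eq:hat_b_t}. Therefore $F(b,k)=F(b^k,k)=b_{k,1}$, independently of $b$.

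\textbf{Fact (b).} A null observation leaves the belief unchanged at the posterior stage, $U(b,\varnothing)=b$, on reachable beliefs. This holds because reachable beliefs put all mass of the MAP state at $\Delta=0$: \eqref{eq:b_t} places the mass of $\hat x_{\rm next}$ at zero, and $\hat x(b_{t+1})=\hat x_{\rm next}(\tilde b_t)$ since both are the argmax of the same marginal under the fixed tie-breaking rule. I would check this invariant inside the induction, so the argument is not circular.

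\textbf{Induction.} The claim at depth $t$ is that $\mathcal B_t\subseteq\{b_0\}\cup\{b_{i,n}:i\in\mathcal X,1\le n\le t\}$. For the step, take $b\in\mathcal B_t$ and any observation $o$ with $\rho(o|b,a)>0$.
\begin{itemize}
\item If $o=k$, then $F(b,o)=b_{k,1}$ by Fact (a).
\item If $o=\varnothing$ and $b=b_{i,n}$, then $F(b,\varnothing)=b_{i,n+1}$ by definition \eqref{eq:belief_recursion}.
\item If $o=\varnothing$ and $b=b_0$, then $F(b_0,\varnothing)=F(b^{i_0},i_0)=b_{i_0,1}$. This uses $U(b_0,\varnothing)=U(b_0,i_0)$, since both equal the synchronized posterior at $i_0$.
\end{itemize}

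\textbf{Sufficiency.} The transition law \eqref{eq:Tb} and the cost \eqref{eq:cost_function} depend on the belief only through $b$ itself, and $b$ is a function of $(i,n)$. The pair $(i,n)$ is computable from the history: $i$ is the last successfully received state and $n$ is the number of slots since that success. Hence $(i,n)$ carries all the information that $b$ does, and the Bellman equation \eqref{eq:bell} can be rewritten on this pair.

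\textbf{Marginal of $b_{i,n}$.} The source marginal of $b_{i,n}$ is $e_i\bm P^n$. The reason is that observations are null after the success, the channel is independent of the source, and summing \eqref{eq:b_t} over $\Delta$ just propagates the marginal through $\bm P$. Hence the MAP estimate at step $n$ is $\hat x_i(n)$, and the $\Delta$-mass is tracked as follows.

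\textbf{Closed form for $g_i(n)$.} I expect the derivation of \eqref{eq:reliable_g_closed_form} to be the main bookkeeping obstacle, and I would handle it through the tail-sum identity
\[
\Delta=\sum_{r=1}^{n}\mathbbm{1}\{\Delta\ge n-r+1\},
\]
so that $\mathbb E[\Delta]=\sum_{r=1}^{n}\mathbb P(\Delta_n\ge n-r+1)$. The age is at least $n-r+1$ exactly when the source was mismatched at every slot $r,\dots,n$, with $\Delta_n\le n$ because of the reset at synchronization. The probability of this event, under the path measure starting from $e_i$, is $e_i\bm P^r\bm D_i(r)\bm P\bm D_i(r+1)\cdots\bm P\bm D_i(n)\mathbf 1$.

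A more concrete route is to show by induction on $n$ that the unnormalized row vector of mass with age $\ge k$ at step $n$ equals $e_i\bm P^{n-k+1}\bm D_i(n-k+1)\bm P\cdots\bm D_i(n)$. Each step right-multiplies by $\bm P$ and then by $\bm D_i(n+1)$. Multiplying by $\bm P$ propagates the source state. Multiplying by $\bm D_i(n+1)$ kills the mass of the estimated state, which is reset to zero, while all surviving mass has its age incremented. The new age $\ge1$ mass is $e_i\bm P^{n+1}\bm D_i(n+1)$. The timing conventions need care: with $r=n$ the expression should give $e_i\bm P^n\bm D_i(n)\mathbf 1$, i.e. the age $\ge1$ mass. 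Note also that at posterior with a null observation the belief equals the prior, by Fact (b).

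\textbf{Bound \eqref{eq:g_linear_bound}.} Nonnegativity is immediate because every term is a product of nonnegative matrices. For the upper bound, each term is at most $e_i\bm P^r\mathbf 1=1$, since $\bm D_i(\cdot)\preceq \bm I$ and $\bm P$ is stochastic. Summing the $n$ terms gives $g_i(n)\le n$. Equivalently, the age can grow by at most one per slot from zero at synchronization.
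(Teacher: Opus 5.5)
Your proposal is correct and follows essentially the same route as the paper. The route is: closure of $\{b_{i,n}\}$ under successful and null observations by induction (with $b_0$ mapped to $b_{i_0,1}$ either way), propagation of the marginal to $e_i\bm P^n$, and the closed form for $g_i(n)$ by counting consecutive mismatched slots. Your tail-sum indicator $\mathbbm{1}\{\Delta_n\ge n-r+1\}$ is exactly the product $\prod_{h=r}^{n}\mathbbm{1}\{X_h\neq\hat x_i(h)\}$ in the paper's path identity, which also yields $0\le g_i(n)\le n$. Your explicit check that reachable beliefs keep the MAP state's mass at $\Delta=0$ tightens a step the paper only asserts; it works because $\hat x(b_{t+1})$ and $\hat x_{\rm next}(\tilde b_t)$ are computed from the same marginal with the same tie-breaking.
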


\begin{proof}
See Appendix \ref{app:exact_reduction}.
\end{proof}

By Theorem \ref{thm:exact_reduction}, after the initial time, we can define a fully-observed MDP that is exactly equivalent to the belief-MDP problem. The reduced-state MDP is described by $(\mathcal Y,\mathcal A,T_{\rm red},c^\lambda,\gamma)$, where
\begin{itemize}
    \item The state space is $\mathcal Y = \mathcal X\times\mathbb N_{\geq1}$.
    The state $(i,n)$ records the latest successfully observed source state and the elapsed no-success duration.
    \item The action space is $\mathcal A=\{0,1\}$, and each action is feasible at every state $(i,n)\in\mathcal Y$.
    \item The transition probabilities are
    \begin{align}
    T\bigl((i,n+1)\mid(i,n),a\bigr)
    &=1-sa,
    \label{eq:reduced_null_transition}\\
    T\bigl((k,1)\mid(i,n),a\bigr)
    &=sa(\bm P^n)_{ik},\qquad k\in\mathcal X.
    \label{eq:reduced_success_transition}
    \end{align}
    \item The cost function is
    \begin{equation}
    c^\lambda(i,n,a)=(1-sa)g_i(n)+\lambda a.
    \label{eq:reduced_stage_cost}
    \end{equation}
    \item $\gamma\in[0,1)$ is the discount factor.
\end{itemize}

Defining $V_i(n)$ as the optimal value at reduced state $(i,n)$, the corresponding Bellman equation is
\begin{align}
V_i(n) & =\min_{a\in\{0,1\}}\Big\{ (1-sa)g_i(n)+\lambda a \notag\\
&+\gamma(1-sa)V_i(n+1) +\gamma sa\sum_{k=1}^{N}(\bm P^n)_{ik}V_k(1) \Big\}. \label{eq:general_reduced_bellman}
\end{align}
For the synchronized initial belief in \eqref{eq:synchronized_initial_belief}, the source state is already known and the AoII is zero. A request cannot improve the current information and only adds the cost $\lambda$. Hence, the passive action is optimal and
\begin{equation}
 V^\lambda(b_0)=\gamma V_{i_0}(1).
 \label{eq:initial_to_reduced_value}
\end{equation}
Note that the monitor does not observe the physical pair $(X_t,\Delta_t)$, but it does know the reduced state $(i,n)$ because both the latest successful observation and the elapsed time since that observation are part of its own information history. The reduction is exact and does not discretize or approximate the joint belief. It also shows that the state set grows only linearly with the largest elapsed duration considered in computation.

\subsection{Truncation of State Space}
As the belief-MDP problem has an unbounded state space, finding the optimal solution would require solving the Bellman equation on infinitely many states. Indeed, the $n$ coordinate of the beliefs can take values in all $\mathbb N_0$. In practical implementations, this is not feasible, and the belief-MDP must be approximated by a finite model, where the $n$ coordinate of the belief is truncated at a value $H$. In this way, the original problem is approximated by a finite dynamic program that works as follows:
\begin{itemize}
    \item Offline, the unbounded AoII dynamics are replaced by their clipped counterpart: after the posterior update and the prediction step, any AoII values larger than $H$ are clipped at the boundary value $H$.
    \item Online, the monitor applies the action prescribed by the dynamic program for the truncated belief.
\end{itemize}
The belief-MDP problem is replaced by a computable finite approximation that preserves the exact evolution inside the explored region and clips only the AoII tail. This finite belief-MDP is defined on the truncated state space 
\begin{equation}
\mathcal Y_H=\mathcal X\times\{1,\dots,H\}.
\label{eq:finite_reduced_state_space}
\end{equation}
The finite-model optimal policy obtained from the truncated model may differ from the optimal policy of the original problem. In this section, we quantify the error induced by this truncation, and we show that the truncated model maintains a controlled approximation of the original problem. For every $H\in\mathbb N$, let $V_H^*(b_0)$ be the optimal value of the finite reduced-state MDP obtained by retaining the states with elapsed no-success duration at most $H$ and assigning zero continuation value after a null transition from the boundary. For sake of simplicity, define $\eta:=1-s$ and $\beta:=\gamma\eta$. Let
\[
q:=\max_{i,j\in\mathcal X}\sum_{\ell\neq j}P_{i\ell}
=1-\min_{i,j\in\mathcal X}P_{ij}.
\]
The constant $q$ is the worst case probability that the source avoids any prescribed estimate, over all current source states and all possible estimates.

For every $n\geq1$, define
\begin{equation}
\overline C_n:=
\begin{cases}
\displaystyle
\frac{\lambda}{1-\beta}
+\frac{\eta q}{1-q}
\left(
\frac{1}{1-\beta}-\frac{q^n}{1-\beta q}
\right),
& 0\leq q<1,\\[3mm]
\displaystyle
\frac{\lambda+\eta n}{1-\beta}
+\frac{\eta\beta}{(1-\beta)^2},
& q=1.
\end{cases}
\label{eq:Cbar_AT_closed_form}
\end{equation}

\begin{theorem}
\label{thm:horizon_truncation}
The error due to the truncation satisfies
\begin{equation}
0\leq V^*(b_0)-V_H^*(b_0) \leq \gamma^{H+1}\overline M_H,
\label{eq:reduced_H_error_bound}
\end{equation}
where
\begin{equation}
\overline M_H :=\overline C_{H+1} + \frac{\gamma s}{1-\gamma}\overline C_1.
\label{eq:M_H_AT}
\end{equation}
\end{theorem}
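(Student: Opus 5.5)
The plan is to compare the two optimal values through a two-sided sandwich, since both are defined on the same reduced-state MDP of Theorem~\ref{thm:exact_reduction} and agree on every state with $n\le H$. Start from \eqref{eq:initial_to_reduced_value}: $V^*(b_0)=\gamma V_{i_0}(1)$. The truncated model is identical to the reduced MDP \eqref{eq:general_reduced_bellman} on $\mathcal Y_H$, except that a null transition from $(i,H)$ gets zero continuation.

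\emph{Lower bound.} All stage costs $c^\lambda(i,n,a)$ in \eqref{eq:reduced_stage_cost} are nonnegative, because $g_i(n)\ge0$ by \eqref{eq:g_linear_bound}. Fix any policy of the infinite model and restrict it to $\mathcal Y_H$. Its truncated cost equals its true cost minus the nonnegative discounted cost accrued after the first exit to some $(i,H+1)$. Taking infima gives $V_H^*(b_0)\le V^*(b_0)$. Equivalently, one can argue by monotonicity of value iteration, using the fact that the zero boundary value lies below the true value.

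\emph{Upper bound.} Let $\phi_H$ be optimal for the truncated model, and extend it to a policy $\tilde\phi$ of the infinite model: follow $\phi_H$ until the first time the state reaches some $(i,H+1)$, and pull at every slot from then on. On paths that never leave $\mathcal Y_H$, the costs of $\tilde\phi$ and $\phi_H$ coincide. The earliest time $(i,H+1)$ can be reached from $b_0$ is $t=H+1$, since $b_0$ sits at $t=0$ and $(i_0,1)$ at $t=1$. Hence
\[
V^*(b_0)-V_H^*(b_0)\le J^{\tilde\phi}(b_0)-V_H^*(b_0)\le\gamma^{H+1}\sup_i W_i(H+1),
\]
where $W_i(m)$ is the value of the always-pull policy from $(i,m)$.

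\emph{Bounding the per-slot AoII.} I would first prove $g_i(m)\le\sum_{k=1}^m q^k$, and $g_i(m)\le m$ when $q=1$. Write $g_i(m)=\sum_{k\ge1}\mathbb P(\Delta\ge k)$. The event $\{\Delta\ge k\}$ requires the source to avoid the (deterministic, belief-determined) MAP estimate in $k$ consecutive transitions. Conditioned on the past, each such avoidance has probability at most $q$, because the estimates $\hat x_i(r)$ are fixed functions of $(i,r)$. Also $\Delta\le m$ within the cycle. This is exactly what produces the geometric sum in $\overline C_n$: one checks that
\[
\sum_{t\ge0}\beta^t\,\eta\sum_{k=1}^{n+t}q^k=\frac{\eta q}{1-q}\Big(\frac1{1-\beta}-\frac{q^n}{1-\beta q}\Big).
\]

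\emph{Regenerative decomposition of $W_i(H+1)$.} Under always-pull, the process stays on the null branch with probability $\eta^t$ after $t$ steps. Each slot on that branch costs at most $\lambda+\eta g_i(H+1+t)$, where the factor $\eta$ comes from $(1-s)$ in \eqref{eq:reduced_stage_cost}. Summing with discount $\gamma$ along this branch gives $\overline C_{H+1}$, with the $q=1$ case handled by the analogous arithmetic-geometric sum. Every success regenerates the process at some $(k,1)$, and the cycle started there is again bounded by $\overline C_1$ on its own null branch. The expected discounted number of successes occurring strictly after the current slot is at most $\sum_{t\ge1}\gamma^t s=\gamma s/(1-\gamma)$. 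Charging each regeneration the bound $\overline C_1$ therefore yields $\sup_iW_i(H+1)\le \overline C_{H+1}+\frac{\gamma s}{1-\gamma}\overline C_1=\overline M_H$.

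The main obstacle is making this cycle accounting rigorous without double counting. I would try to write $W$ as a sum over cycles via the strong Markov property at success times, bounding each cycle's null-branch cost by $\overline C_1$ times the discount at its start. The delicate part is verifying the per-slot conditional-avoidance bound $q$ for the MAP-driven AoII, which I expect to be the technical crux.
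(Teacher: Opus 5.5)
Your proposal is correct and follows the paper's proof essentially step for step: nonnegative stage costs give $V_H^*(b_0)\le V^*(b_0)$; the upper bound comes from extending the truncated-optimal policy by always-transmit after the first exit to $(i,H+1)$, with $\tau_H\ge H+1$; the avoidance argument gives $g_i(n)\le\sum_{m=1}^{n}q^m$; and the current-cycle cost is bounded by $\overline C_n$. The only minor difference is how you sum over later cycles: the paper solves the renewal inequality $W^{\rm AT}\le\overline C_1+\Gamma^{\rm AT}W^{\rm AT}$ with $\Gamma^{\rm AT}=\gamma s/(1-\beta)$, whereas you count regenerations slot by slot (probability $s$ per slot under always-transmit) and get $\gamma s/(1-\gamma)=\Gamma^{\rm AT}/(1-\Gamma^{\rm AT})$ directly, so both routes give the same constant $\overline M_H$.
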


\begin{proof}
See Appendix \ref{app:horizon_truncation}
\end{proof}

Theorem \ref{thm:horizon_truncation} provides a direct rule for selecting the only approximation parameter. For a desired absolute error $\varepsilon>0$, it is sufficient to choose $H$ such that
\begin{equation}
\gamma^{H+1}\overline M_H\leq\varepsilon.
\label{eq:H_selection_absolute}
\end{equation}
Equivalently, consider the normalized tolerance
\[
\widehat\varepsilon:=(1-\gamma)\varepsilon,
\]
which has the same scale as the cost function. It is sufficient that
\begin{equation}
(1-\gamma)\gamma^{H+1}\overline M_H \leq \widehat\varepsilon.
\label{eq:H_selection_normalized}
\end{equation}
The left-hand side converges to zero as $H\to\infty$: when $q<1$, $\overline M_H$ is uniformly bounded, while for $q=1$ it grows at most linearly in $H$. Thus, the minimum value for $H$ can be obtained through a simple one-dimensional search.

Table \ref{tab:DH_selection_example} illustrates the resulting rule for a representative parameter set. The table reports the smallest value $H$ obtained from the condition \eqref{eq:H_selection_normalized} with $q=0.8$, $s=0.8$, and $\lambda=1$.
\begin{table}[t!]
    \centering
    \caption{Minimum $H$ satisfying \eqref{eq:H_selection_normalized} for $q=0.8$, $s=0.8$, and $\lambda=1$.}
    \label{tab:DH_selection_example}
    \scriptsize
    \begin{tabular}{c|ccccc}
        \toprule
        $\widehat{\varepsilon}$
        & $\gamma=0.50$ & $\gamma=0.70$ & $\gamma=0.85$ & $\gamma=0.90$ & $\gamma=0.95$ \\
        \midrule
        $1$              & $1$  & $1$  & $1$  & $1$  & $3$   \\
        $5\cdot10^{-1}$  & $1$  & $2$  & $5$  & $8$  & $17$  \\
        $10^{-1}$        & $3$  & $7$  & $15$ & $24$ & $48$  \\
        $5\cdot10^{-2}$  & $4$  & $9$  & $20$ & $30$ & $62$  \\
        $10^{-2}$        & $7$  & $13$ & $29$ & $45$ & $93$  \\
        $5\cdot10^{-3}$  & $8$  & $15$ & $34$ & $52$ & $107$ \\
        $10^{-3}$        & $10$ & $20$ & $44$ & $67$ & $138$ \\
        \bottomrule
    \end{tabular}
\end{table}

\subsection{Truncated Markov Decision Process} \label{subsec:belief_space_construction_algorithm}
The model reduction presented makes the offline construction significantly simpler than the generic belief exploration specified in \eqref{eq:reachable_Bt}--\eqref{eq:reachable_B}. For a selected $H$, the truncated state space is the discrete set $\mathcal Y_H$, and the quantities required by the Bellman equation are $\bm P^n$, $\hat x_i(n)$, and $g_i(n)$ for $i\in\mathcal X$ and $1\leq n\leq H$. Algorithm \ref{alg:belief_space_construction} summarizes the offline construction. The algorithm first computes the MAP estimate and expected AoII tables. It then creates the exact transitions inside $\mathcal Y_H$ and sends only the boundary null transitions to the terminal state with zero continuation value. The resulting MDP can be solved by standard value iteration. The number of states (including the terminal state $\partial$) is exactly $NH+1$. More importantly, each state-action pair has only one null successor and at most $N$ successful update successors. The reduced construction therefore preserves the exact information structure of the original belief-MDP while making the source of the numerical approximation explicit: only no-success durations larger than $H$ are omitted.

\begin{algorithm}[t!]
\caption{Truncated Markov Decision Process}
\label{alg:belief_space_construction}
\begin{algorithmic}[1]
\State Choose the no-success truncation level $H$
\State Set $\mathcal Y_H\gets\mathcal X\times\{1,\dots,H\}$ and add the terminal state $\partial$
\For{$n=1,\dots,H$}
    \State Compute $\bm P^n$
    \For{all $i\in\mathcal X$}
        \State Compute $p_i(n)=e_i\bm P^n$ and $\hat x_i(n)$ from \eqref{eq:reduced_map_estimate}
        \State Construct $\bm D_i(n)$ from \eqref{eq:reduced_mismatch_matrix}
    \EndFor
\EndFor
\For{all $(i,n)\in\mathcal Y_H$}
    \State Compute $g_i(n)$ from \eqref{eq:reliable_g_closed_form}
    \For{all $a\in\{0,1\}$}
        \State Set $c^\lambda(i,n,a)\gets(1-sa)g_i(n)+\lambda a$
        \For{all $k\in\mathcal X$}
            \State Set $T_H((k,1)|(i,n),a)\gets sa(\bm P^n)_{ik}$
        \EndFor
        \If{$n<H$}
            \State Set $T_H((i,n+1)|(i,n),a)\gets1-sa$
        \Else
            \State Set $T_H(\partial|(i,H),a)\gets1-sa$
        \EndIf
    \EndFor
\EndFor
\State Assign zero cost and zero continuation value to $\partial$
\State Return the finite MDP on $\mathcal Y_H\cup\{\partial\}$
\end{algorithmic}
\end{algorithm}

\section{Look-Up-Table Policy for Reliable Links}\label{sec:reliable_link}
We now start analyzing the actual solution of our problem in greater depth. We first specialize the reduced-state MDP problem to a reliable link, i.e., $s=1$. Although this constitutes an optimistic special case, we analyze it first since it provides a useful analytical baseline. The structure of the analysis will be useful for the  general case. Every request succeeds, so a transmission always ends the current no-success cycle, reveals the current source state, and regenerates the process at a state $(k,1)$. The next theorem shows that the entire optimal policy is determined by the first time selected for requesting a transmission after each possible observed state.

\begin{theorem}
\label{thm:reliable_waiting_table}
For the reliable-link reduced-state MDP problem, there exists an optimal deterministic policy that is completely described by
\[
\bm m^*=(m_1^*,\dots,m_N^*) \in \left(\mathbb N_{\geq1}\cup\{\infty\}\right)^N.
\]
After observing state $i$, the monitor remains idle for time slots $n=1,\dots,m_i^*-1$, and transmits for the first time at time slot $n=m_i^*$.
\end{theorem}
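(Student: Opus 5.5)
With $s=1$, the reduced Bellman equation \eqref{eq:general_reduced_bellman} becomes
\[
V_i(n)=\min\Big\{g_i(n)+\gamma V_i(n+1),\;\lambda+\gamma\sum_{k}(\bm P^n)_{ik}V_k(1)\Big\}.
\]
The plan is to exploit the regenerative structure: every request ends the current cycle, so the optimal policy only has to decide, for each last-observed state $i$, how long to wait. I would proceed in three steps.

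\textbf{Step 1: existence of a stationary deterministic optimal policy.} The stage cost is nonnegative, and by Theorem~\ref{thm:exact_reduction} it grows at most linearly, $0\le g_i(n)\le n$. For any policy the discounted cost is therefore finite; for example, the always-pull policy costs at most $\lambda/(1-\gamma)$. Hence $V$ is the minimal nonnegative solution of the Bellman equation, which can be obtained by value iteration from zero (positive/negative dynamic programming on a countable state space with two actions). Since the action set is finite, the minimum in the Bellman equation is attained at every state. Any selector of the minimizers defines a stationary deterministic Markov policy $\mu^*$ on $\mathcal Y$. I would verify that $\mu^*$ is optimal by the standard argument for nonnegative costs with finite action sets (Bertsekas–Shreve, negative-cost case): the value of a greedy policy with respect to the minimal solution equals that solution. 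This verification is the step I expect to be the main technical obstacle. If it becomes messy, I would instead use the weighted sup-norm $w(n)=n+1$, under which the Bellman operator is a contraction because $\gamma(n+2)\le \gamma' (n+1)$ for large $n$ with some $\gamma'<1$. This yields uniqueness of the solution and optimality of the greedy policy directly.

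\textbf{Step 2: reduction to waiting times.} Define
\[
m_i^*:=\inf\{n\ge1:\mu^*(i,n)=1\}\in\mathbb N_{\ge1}\cup\{\infty\}.
\]
Starting from $(i,1)$, the policy $\mu^*$ moves deterministically through $(i,1),(i,2),\dots$ while it idles, because $T((i,n+1)\mid(i,n),0)=1$. At time $n=m_i^*$ it transmits and, since $s=1$, jumps to some $(k,1)$. Consequently, the states $(i,n)$ with $n>m_i^*$ are never visited from $(i,1)$.

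\textbf{Step 3: the waiting-table policy achieves the optimal value.} Let $\tilde\mu$ be the policy that idles for $n<m_i^*$, transmits at $n=m_i^*$, and acts arbitrarily (say, transmits) for $n>m_i^*$. By Step 2, $\tilde\mu$ induces the same trajectory law as $\mu^*$ from every $(i,1)$. Hence the two policies have equal values at every $(i,1)$, and also at every reachable $(i,n)$ with $n\le m_i^*$.

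Finally, combining this with \eqref{eq:initial_to_reduced_value}, $V^\lambda(b_0)=\gamma V_{i_0}(1)$, the table $\bm m^*$ gives the optimal value from the synchronized initial condition. The case $m_i^*=\infty$ (never request after observing $i$) is admissible, and its cost is finite by the linear bound on $g_i$.
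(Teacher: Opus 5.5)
Your proof is correct, but it takes a different route from the paper's.

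\textbf{The paper's route.} The paper never uses existence theory on the countable space $\mathcal Y$. It works only with the $N$ cycle-start values $W_i=V_i(1)$. It writes $F_i(m,W)$ for the cost of idling $m-1$ slots, pulling, and then continuing with $W$. It shows that $(\mathcal T W)_i=\min_{m\in\mathbb N_{\geq1}\cup\{\infty\}}F_i(m,W)$ is a $\gamma$-contraction on $\mathbb R^N$. It then takes $m_i^*$ as a minimizer at the fixed point $W^*$, with attainment following from $F_i(m,W^*)\to F_i(\infty,W^*)$.

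\textbf{Your route.} You obtain an optimal stationary deterministic policy $\mu^*$ on all of $\mathcal Y$ from nonnegative-cost theory. A selector attaining the minimum in the Bellman equation for $V$ is optimal, because its value is the increasing limit of its $k$-stage costs from zero, each bounded by $V$. You then read $m_i^*$ off as the first pulling time of $\mu^*$. This works because idling moves deterministically from $(i,n)$ to $(i,n+1)$ and every pull regenerates when $s=1$.

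\textbf{What each buys.} Your route is rigorous exactly where the paper says only ``by construction.'' Attainment of the minimum, the case $m_i^*=\infty$, and equality of the table policy's value with the optimal value all come for free. The paper still owes two steps. First, that the optimal vector $V(1)$ is a fixed point of $\mathcal T$, which needs unrolling the Bellman equation and $\gamma^M V_i(M+1)\to0$. Second, that the value of the $\bm m^*$ policy solves $W=F(\bm m^*,W)$. You also correctly restrict the optimality claim to states $(i,n)$ with $n\leq m_i^*$, whereas the paper asserts it from every reduced state. The paper's route, in turn, is constructive: it characterizes $\bm m^*$ through an $N$-dimensional fixed point over waiting times, which is how the table would actually be computed. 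Your argument gives only existence.

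\textbf{Two small corrections.} Nonnegative costs correspond to assumption (P) in Bertsekas--Shreve. ``Negative'' is Strauch's reward convention, and under (N) being greedy with respect to the optimal value is not sufficient. Also, your weighted-norm fallback with $w(n)=n+1$ fails to contract at $n=1$ when $\gamma>2/3$, since the ratio there is $3\gamma/2$. Use $w(n)=n+c$ with $c>\gamma/(1-\gamma)$ instead.
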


\begin{proof}
See Appendix \ref{app:reliable_waiting_table}.
\end{proof}

Theorem \ref{thm:reliable_waiting_table} shows that the reliable-link scheduler does not need to store an action for every pair $(i,n)$. For each possible latest observation $i$, it only stores the optimal first request time $m_i^*$. Hence, the table
\[
\begin{array}{c|cccc}
\text{latest observation } i
& 1 & 2 & \cdots & N \\
\hline
\text{first request time } m_i^*
& m_1^* & m_2^* & \cdots & m_N^*
\end{array}
\label{eq:reliable_waiting_table_display}
\]
fully specifies the optimal policy from one successful observation to the next.

\section{Persistent Policy for Unreliable Links}\label{sec:unreliable_persistent_policy}
We now consider the general case with $s\in(0,1)$. The exact reduced representation of Theorem \ref{thm:exact_reduction} remains valid, but a transmission attempt does not necessarily end the current no-success cycle because it may fail. A successful transmission still produces a regeneration: the monitor observes the current source state exactly, the posterior AoII is zero, and the next cycle starts from the new observed state. However, after a failure, the reduced state moves from $(i,n)$ to $(i,n+1)$, and the optimal action may in principle return to idle. Hence, unlike the reliable case, the optimal policy need not be described by a table of waiting times. Our goal here is not to solve the unreliable-link problem in closed form. Instead, we introduce a heuristic persistent policy under which the scheduler always transmits starting from a given time slot, and we provide a theoretical analysis of its performance compared to the optimal policy.

%\subsection{Regenerative Structure under Failed Transmissions}
We organize the system evolution into cycles delimited by successful transmissions. Within each cycle, the latest observed source state remains unchanged, while the elapsed no-success duration increases after every idle slot or failed transmission attempt. This representation separates the deterministic evolution of the monitor’s information from the random occurrence of the next successful update and provides the basis for the regenerative analysis developed here.

Throughout this section, let $\eta:=1-s$ be the failure probability of a transmission attempt. The functions $p_i(n)$, $\hat x_i(n)$, $\bm D_i(n)$, and $g_i(n)$ defined in Section \ref{sec:model_reduction} characterize every no-success cycle. Note that failed attempts do not change the information available to the monitor. Hence, the source marginal distribution is still
\[
p_i(n)=e_i\bm P^n,
\]
the MAP estimate is still $\hat x_i(n)$, and the expected AoII cost in a time slot without a successful update is still $g_i(n)$.

%\subsection{Persistent Policy}

We propose a heuristic persistent policy described by a vector
\[
\bm m=(m_1,\dots,m_N)\in\mathbb N_{\geq1}^N.
\]
After a successful transmission that observes state $i$, the monitor remains idle for time steps $n=1,\dots,m_i-1$, and transmits for the first time at time step $n=m_i$. If this transmission fails, the monitor keeps sending a new pull request until a success occurs. Thus, along the no-success cycle after observing $i$, the action is
\[
a_i(n,\bm m)=\mathbbm{1}\{n\geq m_i\},
\qquad n\geq1.
\]
The vector $\bm m$ is therefore an analogue of the waiting time table of Section \ref{sec:reliable_link}, with the additional persistent request rule after the first attempt. We remark that this rule is proposed as a heuristic structure and it is not claimed to be optimal.

The vector $\bm m$ is a design parameter selected offline; for instance, it can be chosen by minimizing the upper bound $U_{\bm \nu,K}(\bm m)$ derived below over a finite candidate set, or it can be chosen by solving the original MDP, keeping the smallest values for which action $a=1$ is optimal. Once selected, only the $N$ waiting times need to be stored online.

%\subsection{Upper Bound on the Suboptimality of the Persistent Policy}
Using the persistent rule described above allows a substantial simplification during the online implementation of our model. The trade-off is that this policy is suboptimal. Here, we provide an upper bound in closed form on the difference in performance between the persistent policy and the optimal policy. Let $\beta := \gamma\eta = \gamma(1-s)$. For a fixed table $\bm m$, let $V_i(\bm m)$ denote the discounted cost starting from the first time step after a successful observation of state $i$. Fix a finite regenerative horizon $T\geq1$. For
\[
u=(u_1,\dots,u_T)\in[0,1]^T,
\]
where $u_n$ represents the conditional probability of requesting a transmission in time slot $n\in\{1,\dots,T\}$ given that no success has occurred earlier in the cycle, define
\begin{equation}
Q_0(u):=1, \qquad Q_n(u):=(1-su_n)Q_{n-1}(u),
\label{eq:no_regeneration_probability}
\end{equation}
where $Q_n(u)$ is the probability that the cycle has not regenerated by the end of time slot $n$.

Further, define the following quantities:
\begin{align}
\Gamma_{ik}(\bm m) &:= s\gamma^{m_i} e_i\bm P^{m_i}(I-\beta\bm P)^{-1}e_k^\top,\ i,k\in\mathcal X, \label{eq:persistent_regeneration_matrix}\\[1\jot]
\widetilde C_{i,K}(\bm m) &:= \sum_{n=1}^{m_i-1}\gamma^{n-1}g_i(n) \notag\\
&\qquad + \sum_{n=m_i}^{K}\gamma^{n-1}\eta^{n-m_i}\bigl(\lambda+\eta g_i(n)\bigr) \notag\\[1\jot]
&\qquad + \gamma^{m_i-1}\beta^{K+1-m_i}R_K, \qquad i\in\mathcal X, \label{eq:persistent_cycle_cost_cutoff}\\
C_{i,T}(u) &:= \sum_{n=1}^{T}\gamma^{n-1} \Big(\lambda u_nQ_{n-1}(u) + g_i(n)Q_n(u) \Big), \label{eq:finite_relaxed_block_cost}\\
\Gamma_T(u) &:= \sum_{n=1}^{T}\gamma^n s u_nQ_{n-1}(u) + \gamma^TQ_T(u), \label{eq:finite_regeneration_prob}\\
v_T &:= \min_{i\in\mathcal X,\,a\in\mathcal A_T} \frac{C_{i,T}(a)}{1-\Gamma_T(a)}.\label{eq:v_regenerative}
\end{align}
The following theorem gives a closed-form upper bound on the difference between the costs of the persistent and the optimal policy. The full derivation, which can be found in Appendix \ref{app:persistent_policy_gap_bound}, proceeds in three steps. First, we derive an upper bound on the normalized discounted cost of the persistent policy. Second, we derive a lower bound on the normalized discounted cost of the optimal policy. Third, we combine these two bounds to obtain an upper bound on the performance gap between the persistent policy and the optimal policy. All bounds are provided in closed form and can be computed from the parameters of the model.

\begin{theorem}
\label{thm:persistent_policy_gap_bound}
Fix a persistent first-attempt table $\bm m\in\mathbb N_{\geq1}^N$, a cutoff $K\geq\max_i m_i$, a finite regenerative horizon $T\geq1$, and a post-regeneration distribution $\bm\nu$\footnote{For an exact initial state $i_0$, one may take $\bm\nu=e_{i_0}$.}. Then
\begin{equation}
0 \leq \bar J_{\bm\nu}^{\rm pers}(\bm m) - \bar J_{\bm\nu}^* \leq B_{\bm\nu,K,T}(\bm m),
\label{eq:persistent_gap_bound}
\end{equation}
where
\begin{equation}
\label{eq:persistent_gap_bound_def}
\begin{aligned}
B_{\bm\nu,K,T}(\bm m) := \; &
(1-\gamma)\bm\nu^\top \bigl(I-\bm\Gamma(\bm m)\bigr)^{-1} \widetilde{\bm C}_K(\bm m) \\&- (1-\gamma)\sum_{i=1}^{N}\nu_i \min_{a\in\mathcal A_T} \left\{C_{i,T}(a)+\Gamma_T(a)v_T\right\}.
\end{aligned}
\end{equation}
\end{theorem}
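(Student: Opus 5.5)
The plan follows the three-step structure already announced: bound the persistent policy's normalized cost from above, bound the optimal normalized cost from below, and subtract. Throughout, $\bar J$ denotes $(1-\gamma)$ times the discounted value started from a post-regeneration state drawn from $\bm\nu$. The left inequality $0\le \bar J^{\rm pers}_{\bm\nu}(\bm m)-\bar J^*_{\bm\nu}$ is immediate, because the persistent policy is admissible in the reduced MDP \eqref{eq:general_reduced_bellman}.

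\textbf{Upper bound on the persistent policy.} We use the regenerative structure. Starting at $(i,1)$, the persistent rule is idle for $n<m_i$ and transmits for $n\ge m_i$. The success time is therefore $m_i+G$ with $G$ geometric with parameter $s$. The state revealed at a success in slot $n$ is distributed as $e_i\bm P^{n}$, and the process restarts at the next slot. Summing $s\eta^{n-m_i}\gamma^{n}e_i\bm P^n e_k^\top$ over $n\ge m_i$ gives exactly $\Gamma_{ik}(\bm m)$ in \eqref{eq:persistent_regeneration_matrix}, using $\sum_{j\ge0}\beta^j\bm P^j=(I-\beta\bm P)^{-1}$. This yields the renewal equation
\[
\bm V(\bm m)=\bm C(\bm m)+\bm\Gamma(\bm m)\bm V(\bm m),
\]
where $C_i(\bm m)$ is the exact discounted in-cycle cost. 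Since $\bm\Gamma$ is substochastic with row sums at most $\gamma<1$, the matrix $I-\bm\Gamma$ is invertible with a nonnegative inverse. The in-cycle cost consists of $g_i(n)$ for $n<m_i$, and $\gamma^{n-1}\eta^{n-m_i}(\lambda+\eta g_i(n))$ for $n\ge m_i$ (cost $\lambda$ plus AoII weighted by the failure probability). We truncate this series at $K$ and bound the tail using $g_i(n)\le n$ from Theorem \ref{thm:exact_reduction}. This should produce the remainder term $\gamma^{m_i-1}\beta^{K+1-m_i}R_K$, with $R_K$ a closed-form geometric sum such as $(\lambda+\eta(K+1))/(1-\beta)+\eta\beta/(1-\beta)^2$. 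Hence $C_i\le\widetilde C_{i,K}$, and monotonicity of $(I-\bm\Gamma)^{-1}$ gives
\[
\bar J^{\rm pers}_{\bm\nu}\le(1-\gamma)\bm\nu^\top(I-\bm\Gamma)^{-1}\widetilde{\bm C}_K .
\]

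\textbf{Lower bound on the optimal policy.} This is the main obstacle. We first show that $V^*_k(1)\ge v_T$ for all $k$, where $v_T$ is defined in \eqref{eq:v_regenerative}. Over the first $T$ slots of a cycle, any (possibly randomized) policy is described by conditional request probabilities $u\in\mathcal A_T$. The cost incurred before regeneration or before time $T$ is $C_{i,T}(u)$. The continuation is the discounted mass $\sum_n\gamma^n su_nQ_{n-1}$ at successes, which restart at some $(k,1)$ with value at least $\underline v:=\min_k V^*_k(1)$. The remaining mass $\gamma^TQ_T$ continues from $(i,T+1)$, whose value is also at least $\underline v$. That last claim needs justification. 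One route is to argue $V_i(n)\ge\min_k V_k(1)$ via a coupling showing that longer no-success durations are no cheaper. If monotonicity fails, the fallback is to note that from $(i,T+1)$ the future is again a sequence of cycles, each incurring at least the minimal cost. I would try to prove directly that $\inf_{i,n}V_i(n)$ is attained at $n=1$ by restarting the argument from an arbitrary state. This gives
\[
V^*_i(1)\ge C_{i,T}(u)+\Gamma_T(u)\underline v \quad\text{for the optimal } u.
\]
Taking $i$ to be the minimizer and rearranging yields $\underline v\ge\min_{i,a}C_{i,T}(a)/(1-\Gamma_T(a))=v_T$, since $\Gamma_T<1$. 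Substituting back then gives, for each $i$,
\[
V^*_i(1)\ge\min_{a\in\mathcal A_T}\{C_{i,T}(a)+\Gamma_T(a)v_T\}.
\]
Averaging over $\bm\nu$ and multiplying by $(1-\gamma)$ produces the subtracted term in $B_{\bm\nu,K,T}$.

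\textbf{Combining.} Subtracting the lower bound from the upper bound gives \eqref{eq:persistent_gap_bound}.
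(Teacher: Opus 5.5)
Your upper bound is the paper's argument: the renewal equation $\bm V=\bm C+\bm\Gamma\bm V$, row sums of $\bm\Gamma$ at most $\gamma$, and a cutoff at $K$ with the tail controlled by $g_i(n)\le n$. Your lower bound also has the paper's architecture: a regenerative relaxation over the first $T$ slots of a cycle, followed by the fixed-point step $\min_kV^*_k(1)\ge v_T$.

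One small repair: the conditional request probabilities lie in $[0,1]^T$, not in $\mathcal A_T$. Either restrict to a deterministic stationary optimal policy, or argue as the paper does. Its argument is that $C_{i,T}(u)+\Gamma_T(u)z$ is affine and $C_{i,T}(u)/(1-\Gamma_T(u))$ is linear-fractional in each coordinate $u_r$, so both infima are attained at vertices.

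The genuine gap is the step you flag yourself. Charging the no-success mass $\gamma^TQ_T(u)$ at rate $\min_kV^*_k(1)$ requires $V^*_i(T+1)\ge\min_kV^*_k(1)$. The paper does not prove this either. It only asserts that granting a free regeneration at slot $T$ ``can only reduce the continuation cost.''

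None of your suggested routes can close this, because as far as I can tell the inequality fails in general. The estimate is not a control variable: it is the MAP of the belief, and AoII is path-dependent. So fresh information can raise future AoII, and unlike in a standard POMDP the value of information need not be nonnegative.

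Here is a sketch of a bad case. Take $N=3$ with $i\to i+1\pmod 3$ with probability $1-\epsilon$, a slip $i\to i$ with probability $\epsilon$, and a tiny perturbation $\delta$ making state $1$ the unique stationary mode. After a fresh observation of any $k$, $\hat x_k(n)$ follows the cycle, up to a slowly drifting offset, for order $\epsilon^{-1}\log(1/\delta)$ slots. Once a slip puts source and estimate out of phase, they stay out of phase for $\Theta(1/\epsilon)$ slots. Hence $g_k(n)=\Theta(1/\epsilon)$ on most of that window, for every $k$. After the MAP settles at state $1$, the source revisits $1$ every $\approx 3$ slots, and $g_i(n)\approx 1$.

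Take $\lambda$ large enough that never transmitting is optimal, and $\gamma$ close to $1$. Then $V^*_i(n)<\min_kV^*_k(1)$ for large $n$. Moreover, after multiplying by $1-\gamma$ and letting $\gamma\to1$, the subtracted term of $B_{\bm\nu,K,T}$ tends to $\min_k\frac1T\sum_{n\le T}g_k(n)$. For $T$ a few windows long this exceeds the limit $\approx 1$ of $\bar J^*_{\bm\nu}$. Since the upper-bound slack vanishes as $K\to\infty$, the stated bound itself appears to fail for large $K$.

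Your coupling idea does work under extra structure. If $\hat x_i(n)$ is the same for all $(i,n)$, consider a genie that knows $X_{t-1}$ and copies the actions. Its mismatch runs are pathwise shorter, which gives $V^*_i(n)\ge\min_kV^*_k(1)$.

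Without such structure, the clean fix is to bound the no-success continuation below by $0$, which is valid because all values are nonnegative. That is, use $\Gamma_T(u)-\gamma^TQ_T(u)$ instead of $\Gamma_T(u)$ in the relaxation and in $v_T$. This weight is still affine in each $u_r$ and at most $\gamma$. So the vertex reduction and the fixed-point step go through unchanged, giving a valid but slightly weaker certificate.
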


Theorem \ref{thm:persistent_policy_gap_bound} provides an upper bound in closed form for the suboptimality of the persistent policy. Numerical experiments suggest that this bound is tight for most reasonable sets of parameters and the proposed heuristic performs well compared to the optimal solution while requiring lower computational efforts.

\section{Estimator with an Early Stationary Switch}\label{sec:hybrid_estimator}
We now focus on a complementary source of complexity: the implementation of the estimator itself. Recall that, after a successful transmission observing state $i\in\mathcal X$, and as long as no new successful update is received, the source marginal distribution at the $n$-th subsequent decision slot is
\[
p_i(n)=e_i\bm P^n.
\]
The corresponding MAP estimate is therefore
\[
\hat x_i(n) \in \arg\max_{j\in\mathcal X}(e_i\bm P^n)_j.
\]
An exact implementation of the MAP estimator requires, for every possible last observed state $i$, the storage or repeated computation of the entire sequence $\{\hat x_i(n)\}_{n\geq1}$. The main observation of this section is that, under standard ergodicity assumptions, this apparently infinite sequence has an eventually constant tail. Indeed, the source marginal distribution $e_i\bm P^n$ converges to the stationary distribution independently of the initial state $i$. If the stationary distribution has a unique most likely state, then this state eventually becomes the MAP estimate for every post-regeneration trajectory. We first formalize this property and show that the MAP estimator itself admits an exact representation through a table. We then consider an efficient implementation that switches to the stationary estimate before the exact MAP stabilization time. This early stationary switch further reduces the complexity of the estimation process, but may change the AoII cost. Finally, we derive an upper bound on the resulting difference in the optimal values. 

Assume that $\bm P$ is ergodic. Let $\bm\omega$ denote its unique stationary distribution, i.e.,
\[
\bm\omega\bm P=\bm\omega.
\]
We further assume that $\bm\omega$ has a unique mode, indicated by
\[
x_\omega = \arg\max_{j\in\mathcal X}\omega_j.
\]
The following lemma shows that $x_\omega$ is the exact MAP estimate after a sufficiently long no-success period.

\begin{lemma}
\label{lem:map_stabilization}
The MAP stabilization time
\[
\kappa_{\rm MAP} := \min\left\{ \kappa\in\mathbb Z_{\geq0}:
\hat x_i(n)=x_\omega,\; \forall i\in\mathcal X,\; \forall n>\kappa \right\}
\]
is finite.
\end{lemma}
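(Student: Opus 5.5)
The plan is to use the convergence $e_i\bm P^n\to\bm\omega$ for every initial state $i$, combined with the strict gap between the unique mode $x_\omega$ and the other entries of $\bm\omega$. Let
\[
\delta:=\omega_{x_\omega}-\max_{j\neq x_\omega}\omega_j>0,
\]
which is strictly positive by the uniqueness of the mode (if $N=1$ the statement is trivial).

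Ergodicity here means irreducible and aperiodic on the finite space $\mathcal X$. Hence $\bm P^n\to\mathbf 1\bm\omega$ entrywise, and since $\mathcal X$ is finite, the convergence is uniform over the finitely many rows $i$. Therefore there exists $n_0$ such that
\[
\max_{i,j}\bigl|(\bm P^n)_{ij}-\omega_j\bigr|<\delta/2 \quad\text{for all } n\ge n_0 .
\]
If one wants an explicit rate, the standard geometric bound $|(\bm P^n)_{ij}-\omega_j|\le C\rho^n$ with $\rho<1$ can be invoked. This would give a concrete estimate of $\kappa_{\rm MAP}$ as roughly $\log(2C/\delta)/\log(1/\rho)$.

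Next, for $n\ge n_0$, every $i$, and every $j\neq x_\omega$, the triangle inequality gives
\[
(e_i\bm P^n)_{x_\omega} > \omega_{x_\omega}-\delta/2 \ge \omega_j+\delta/2 > (e_i\bm P^n)_j .
\]
So $x_\omega$ is the strict unique maximizer of $e_i\bm P^n$. The argmax in \eqref{eq:reduced_map_estimate} is then a singleton, and the fixed deterministic tie-breaking rule plays no role: $\hat x_i(n)=x_\omega$ for every $i$ and every $n\ge n_0$.

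It follows that the set of $\kappa$ in the definition of $\kappa_{\rm MAP}$ contains $n_0-1$. This set is a nonempty subset of $\mathbb Z_{\ge0}$, so its minimum exists and satisfies $\kappa_{\rm MAP}\le n_0-1<\infty$.

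The only delicate point is that the convergence must hold simultaneously for all initial states $i$ and must involve a strict gap, so that ties cannot arise. Finiteness of $\mathcal X$ together with $\delta>0$ handles both. No earlier result of the paper is needed beyond the definitions of $p_i(n)$ and $\hat x_i(n)$.
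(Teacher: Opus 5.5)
Your proof is correct and follows essentially the same route as the paper: the finitely many rows $e_i\bm P^n$ converge uniformly to $\bm\omega$, and the positive gap between the unique mode and the runner-up then yields a common $n_0$ after which $x_\omega$ is the strict maximizer for every $i$, so $\kappa_{\rm MAP}\leq n_0-1$. The only difference is cosmetic: you use a tolerance of $\delta/2$ with strict inequalities, whereas the paper uses $\delta_\omega/3$ to obtain an explicit positive margin.
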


\begin{proof}
See Appendix \ref{app:map_stabilization}.
\end{proof}

Lemma \ref{lem:map_stabilization} has an immediate implementation consequence. The monitor does not need to store the MAP estimate for arbitrarily large values of $n$. It is sufficient to store the finite set of estimates
\[
\left\{ \hat x_i(n): i\in\mathcal X,\; 1\leq n\leq\kappa_{\rm MAP}\right\},
\]
and to use $x_\omega$ for every time slot after $\kappa_{\rm MAP}$. This produces exactly the same estimate sequence as the original MAP rule. A further reduction can be obtained by switching to $x_\omega$ before the MAP sequence has necessarily stabilized. Fix an integer $\kappa\in\mathbb N_{\geq1}$ and define
\begin{equation}
\hat x_i^{\kappa}(n) =
\begin{cases}
\hat x_i(n), & 1\leq n\leq \kappa,\\
x_\omega, & n>\kappa.
\end{cases}
\label{eq:hybrid_estimator_definition}
\end{equation}
The estimator follows the MAP rule during the first $\kappa$ slots after a successful update and then uses the stationary mode. There are two different regimes. If $\kappa\geq\kappa_{\rm MAP}$, the switch occurs only after the MAP estimate has already stabilized, and \eqref{eq:hybrid_estimator_definition} is an exact implementation of the MAP estimator. If $\kappa<\kappa_{\rm MAP}$, the switch is performed earlier, and a smaller table is required.

For the estimator in \eqref{eq:hybrid_estimator_definition}, define
\[
\bm D_i^{\kappa}(n) =
\operatorname{diag}\!\big( \mathbbm{1}\{1\neq \hat x_i^{\kappa}(n)\}, \dots, \mathbbm{1}\{N\neq \hat x_i^{\kappa}(n)\} \big).
\]
The expected posterior AoII cost induced by this estimator in a no-success cycle starting from state $i$ is
\begin{equation}
g_i^{\kappa}(n) =
\sum_{r=1}^{n} e_i\bm P^r \bm D_i^{\kappa}(r) \bm P\bm D_i^{\kappa}(r+1) \cdots \bm P\bm D_i^{\kappa}(n)\mathbf 1,
\label{eq:hybrid_g_closed_form}
\end{equation}
where, for $r=n$, the product after $\bm D_i^{\kappa}(n)$ is omitted. We use the notation
\[
g_i^{\rm map}(n):=g_i(n), \qquad g_i^{\rm hyb}(n):=g_i^{\kappa}(n).
\]

\begin{remark}\label{rem:exact_and_early_stationary_switch}
If $\kappa\geq\kappa_{\rm MAP}$, then
\[
\hat x_i^\kappa(n)=\hat x_i(n), \qquad i\in\mathcal X,\quad n\geq1.
\]
Consequently,
\[
\bm D_i^\kappa(n)=\bm D_i(n), \qquad g_i^\kappa(n)=g_i(n), \qquad i\in\mathcal X,\quad n\geq1.
\]
In this regime, the stationary-mode representation is equivalent to the MAP estimation procedure.

If $\kappa<\kappa_{\rm MAP}$, the two estimate sequences can differ only during the finite interval
\[
\kappa<n\leq\kappa_{\rm MAP}.
\]
However, their AoII costs need not become identical immediately after $\kappa_{\rm MAP}$: the AoII at a given time slot depends not only on the current estimate, but also on the duration of the mismatch accumulated during the previous time slots. Therefore, a difference between the two estimators may affect the AoII cost even after their current estimates have become identical.
\end{remark}

We now quantify the effect of an early stationary switch on the optimal decision-making problem. For $E\in\{{\rm map},{\rm hyb}\}$, let $V_i^E(n)$ denote the optimal value of the reduced-state MDP when estimator $E$ is used, the latest successful observation was $i$, and the current time slot is the $n$-th time slot after that observation. The estimator does not affect the source marginal distribution, the channel outcomes, or the state observed after a successful transmission. Therefore, the MAP and hybrid models have the same transition probabilities, and they differ only in the expected AoII costs $g_i^E(n)$. Their Bellman equations can be written in the common form
\begin{equation}
\begin{aligned}
V_i^E(n) =
\min_{a\in\{0,1\}}\Big\{& (1-sa)g_i^E(n)+\lambda a\\[-1mm]&
+ \gamma(1-sa)V_i^E(n+1)\\[-1mm]&
+ \gamma sa\sum_{k=1}^{N} (\bm P^n)_{ik}V_k^E(1) \Big\}.
\label{eq:hybrid_reduced_bellman}
\end{aligned}
\end{equation}

If $\kappa\geq\kappa_{\rm MAP}$, the two reduced-state MDPs are identical and therefore
\[
V_{{\rm hyb},\kappa}^*(b_0) = V_{\rm map}^*(b_0).
\]
The remaining analysis is relevant when an arbitrary switching time is used, and in particular when $\kappa<\kappa_{\rm MAP}$.

The performance upper bound is constructed in three steps. First, we measure the difference between the two estimators within one time slot. Second, we identify states in which every optimal policy must transmit, thereby reducing the set of action sequences that must be considered. Third, we combine the discrepancies accumulated during successive regeneration cycles. To obtain a finite bound, it is useful to exclude action sequences that cannot be followed by an optimal policy. Intuitively, if the expected AoII cost is sufficiently large, remaining idle cannot be optimal: the possible saving of the transmission cost cannot compensate for the current and future AoII cost.

The following lemma uses this upper bound to give a sufficient condition under which a pull request is necessarily the optimal action.

\begin{lemma}
\label{lem:hybrid_active}
Let
\begin{equation}
U^{\rm AT} :=
\frac{\lambda+\eta(1-\beta)^{-1}}{1-\gamma}, \qquad \beta=\gamma\eta.
\label{eq:hybrid_always_transmit_upper_bound}
\end{equation}
For either estimator $E\in\{{\rm map},{\rm hyb}\}$, if
\begin{equation}
g_i^E(n) > \frac{\lambda}{s}+\gamma U^{\rm AT},
\label{eq:hybrid_forced_active_condition}
\end{equation}
then action $a=1$ is optimal at state $(i,n)$ in \eqref{eq:hybrid_reduced_bellman}.
\end{lemma}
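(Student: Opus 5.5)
We compare the two branches of the Bellman equation \eqref{eq:hybrid_reduced_bellman} at state $(i,n)$. Write $Q^E(i,n,0)=g_i^E(n)+\gamma V_i^E(n+1)$ and
\[
Q^E(i,n,1)=\eta g_i^E(n)+\lambda+\gamma\eta V_i^E(n+1)+\gamma s\sum_{k}(\bm P^n)_{ik}V_k^E(1).
\]
Subtracting gives
\[
Q^E(i,n,0)-Q^E(i,n,1)=s\,g_i^E(n)-\lambda+\gamma s\Big(V_i^E(n+1)-\sum_k(\bm P^n)_{ik}V_k^E(1)\Big).
\]
Hence it suffices to show that the bracketed term is bounded below by $-U^{\rm AT}$. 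For this we only need $V_i^E(n+1)\geq 0$ and $V_k^E(1)\leq U^{\rm AT}$ for all $k$. Under these two facts the difference is at least $s g_i^E(n)-\lambda-\gamma s U^{\rm AT}$, which is strictly positive under \eqref{eq:hybrid_forced_active_condition}. Therefore $a=1$ attains the minimum, and in fact does so strictly.

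Nonnegativity is immediate, since all stage costs $(1-sa)g^E_i(n)+\lambda a$ are nonnegative; each $g_i^E(n)$ is a sum of products of nonnegative matrices and vectors. The main step is the upper bound $V_k^E(1)\leq U^{\rm AT}$, which we obtain by evaluating the always-transmit policy $a\equiv1$. Let $W_k(n)$ denote its value in the reduced MDP, so that
\[
W_k(n)=\lambda+\eta g_k^E(n)+\gamma\eta W_k(n+1)+\gamma s\sum_j(\bm P^n)_{kj}W_j(1).
\]
Since $V^E\leq W$, it suffices to bound $W$.

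We bound $g_k^E(n)$ by using that under always-transmit the cycle length beyond $n$ is geometric. A direct route is to use $g_k^E(n)\leq n$; the proof of \eqref{eq:g_linear_bound} in Theorem~\ref{thm:exact_reduction} only uses that the AoII grows by at most one per slot, so it applies to either estimator. Setting $M:=\sup_j W_j(1)$ and unrolling along the no-success path from $n=1$, we get
\[
W_k(1)\leq\sum_{n\geq1}\beta^{n-1}\big(\lambda+\eta n+\gamma s M\big).
\]
With $\sum_{n\geq 1}\beta^{n-1}=(1-\beta)^{-1}$ and $\sum_{n\geq1} n\beta^{n-1}=(1-\beta)^{-2}$, this gives
\[
M\leq\frac{\lambda+\eta(1-\beta)^{-1}}{1-\beta}+\frac{\gamma s}{1-\beta}M,
\]
and since $1-\beta-\gamma s=1-\gamma$, we obtain $M\leq U^{\rm AT}$. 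This matches \eqref{eq:hybrid_always_transmit_upper_bound} exactly, which confirms the route.

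The hard step is justifying this self-bounding argument. One must know a priori that $M<\infty$ before rearranging. This follows from a crude bound: the costs along the path are at most $\lambda+n$, and the geometric discounting makes the total finite. Alternatively, one can iterate the policy-evaluation operator from zero and argue by induction that every iterate stays below $U^{\rm AT}$, using monotonicity of that operator. We also need to check that $g_k^E(n)\leq n$ holds for the hybrid costs \eqref{eq:hybrid_g_closed_form}. This follows because each term $e_k\bm P^r\bm D\cdots\bm D\mathbf 1\leq1$, as the products of substochastic matrices and $0/1$ diagonals stay substochastic, and there are $n$ such terms.
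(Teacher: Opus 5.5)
Your proof is correct and follows essentially the same route as the paper: you bound $\max_k V_k^E(1)$ by $U^{\rm AT}$ through the always-transmit policy, using $g_k^E(n)\leq n$ and a regenerative self-bounding inequality, and you then compare the passive and active branches using $V_i^E(n+1)\geq 0$. Your explicit checks that $M<\infty$ before rearranging, and that $g_k^E(n)\leq n$ also holds for the hybrid costs, fill in two points the paper leaves implicit.
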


\begin{proof}
See Appendix \ref{app:hybrid_active}.
\end{proof}

Lemma \ref{lem:hybrid_active} allows us to restrict the action sequences considered in the performance bound. In states satisfying \eqref{eq:hybrid_forced_active_condition}, an optimal policy must transmit. In the remaining states, either action may be selected.

For a switching time $\kappa\in\mathbb N_{\geq1}$ and a horizon $R\geq\kappa$, define the local AoII discrepancy
\begin{equation}
\varepsilon_i^{\kappa}(n) :=
\left|g_i(n)-g_i^{\kappa}(n)\right|,
\qquad i\in\mathcal X,\quad n\geq1.
\label{eq:hybrid_local_discrepancy}
\end{equation}
For $E\in\{{\rm map},{\rm hyb}\}$, let
\[
\begin{aligned}
\mathcal U_R^E:=\Bigl\{ \bm u\in\{0,1\}^{N\times R}:\;&
 u_i(n)=1\ \text{whenever}\\[-1mm]&
 g_i^E(n)>\frac{\lambda}{s}+\gamma U^{\rm AT},\\[-1mm]&
 i\in\mathcal X,\ n=1,\dots,R \Bigr\}.
\end{aligned}
\]
For $\bm u\in\mathcal U_R^E$, define
\[
Q_i^{\bm u}(0):=1, \qquad
Q_i^{\bm u}(n):=\prod_{r=1}^{n}\bigl(1-su_i(r)\bigr),
\quad n=1,\dots,R,
\]
and
\begin{equation}
\begin{aligned}
[\bm c_{\kappa,R}(\bm u)]_i := {}&
\sum_{n=1}^{R}\gamma^{n-1}Q_i^{\bm u}(n)\varepsilon_i^{\kappa}(n)\\[-1mm]
&+\gamma^RQ_i^{\bm u}(R)
\left(\frac{R+1}{1-\gamma}+\frac{\gamma}{(1-\gamma)^2}\right),
\end{aligned}
\label{eq:hybrid_cycle_discrepancy_vector}
\end{equation}
\begin{equation}
[\bm\Theta_R(\bm u)]_{ik}:=
\sum_{n=1}^{R}\gamma^n s u_i(n)Q_i^{\bm u}(n-1)(\bm P^n)_{ik}.
\label{eq:hybrid_regeneration_matrix}
\end{equation}

We can now state the resulting bound.

\begin{theorem}
\label{thm:hybrid_map_bound}
Fix $\kappa\in\mathbb N_{\geq1}$, and let $b_0$ be the synchronized initial belief in \eqref{eq:synchronized_initial_belief}, with initial state $i_0\in\mathcal X$. Let $V_{\rm map}^*(b_0)$ and $V_{{\rm hyb},\kappa}^*(b_0)$ be the optimal discounted values under the MAP and hybrid estimators, respectively. Then
\begin{equation}
\left|V_{\rm map}^*(b_0)-V_{{\rm hyb},\kappa}^*(b_0)\right| \leq
\gamma\inf_{R\geq\kappa} \max_E \; \bigl[B_{\kappa,R}^E(i_0)\bigr],
\label{eq:hybrid_initial_belief_value_bound}
\end{equation}
where
\begin{equation}
B_{\kappa,R}^E(i) :=
\max_{\bm u\in\mathcal U_R^E} \left[(I-\bm\Theta_R(\bm u))^{-1}\bm c_{\kappa,R}(\bm u)\right]_i.
\label{eq:hybrid_finite_horizon_certificate}
\end{equation}
\end{theorem}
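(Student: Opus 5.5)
Since $V^E(b_0)=\gamma V_{i_0}^E(1)$ by \eqref{eq:initial_to_reduced_value} (the passive action is optimal at $b_0$ for both estimators), it suffices to bound $|V_{i_0}^{\rm map}(1)-V_{i_0}^{\rm hyb}(1)|$ by $\max_E B^E_{\kappa,R}(i_0)$ for each fixed $R\geq\kappa$, and then take the infimum over $R$.

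\textbf{Policy-exchange argument.} Let $\phi^E$ be an optimal stationary deterministic policy for model $E$, chosen so that it transmits at every state satisfying \eqref{eq:hybrid_forced_active_condition}; Lemma~\ref{lem:hybrid_active} makes this choice admissible. Both reduced MDPs share transition probabilities and differ only in $g^E$. Hence
\[
V^{\rm map}_{i}(1)-V^{\rm hyb}_{i}(1)\leq J^{\rm map}_i(\phi^{\rm hyb})-J^{\rm hyb}_i(\phi^{\rm hyb}),
\]
where $J^E_i(\phi)$ is the discounted cost of $\phi$ under the costs $g^E$, started at $(i,1)$. The symmetric inequality holds with $\phi^{\rm map}$. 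Consequently,
\[
|V^{\rm map}_i(1)-V^{\rm hyb}_i(1)|\leq \max_E\, \sup_{\phi\ \text{admissible for }E} W_i(\phi),
\]
where $W_i(\phi)$ denotes the discounted sum of $(1-sa)\varepsilon^\kappa$ along the trajectory of $\phi$. The $\lambda a$ terms cancel.

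\textbf{Regenerative decomposition.} A stationary policy restricted to the first $R$ slots of a cycle gives a $\bm u\in\mathcal U_R^E$, namely $u_i(n)=\phi(i,n)$. We split $W_i$ over the first cycle. Slots $n\le R$ without a success contribute $\sum_{n=1}^R\gamma^{n-1}Q^{\bm u}_i(n)\varepsilon^\kappa_i(n)$. A success at slot $n\le R$ regenerates the process at $(k,1)$ with discounted weight $\gamma^n s u_i(n)Q^{\bm u}_i(n-1)(\bm P^n)_{ik}$, which is exactly $\Theta_R$. If the cycle survives past $R$, we have probability $Q^{\bm u}_i(R)$ and discount $\gamma^R$. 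On this event we bound the remaining discrepancy crudely. Theorem~\ref{thm:exact_reduction} applies to both estimators, since its argument uses only the mismatch indicators; it gives $0\le g^E_i(n)\le n$, so $\varepsilon^\kappa_i(n)\le n$ and every later cost is bounded by the elapsed duration. Thus
\[
\sum_{m\ge0}\gamma^m(R+1+m)=\frac{R+1}{1-\gamma}+\frac{\gamma}{(1-\gamma)^2}
\]
bounds the whole remaining tail, including later cycles, which is the second term of \eqref{eq:hybrid_cycle_discrepancy_vector}. Writing $w_i:=\sup_\phi W_i(\phi)$, we obtain
\[
w\le \bm c_{\kappa,R}(\bm u)+\bm\Theta_R(\bm u)\,w
\]
componentwise, for the $\bm u$ induced by the maximizing policy.

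\textbf{Closing the argument.} The matrix $\bm\Theta_R(\bm u)$ is nonnegative with row sums at most $\gamma<1$, so $(I-\bm\Theta_R)^{-1}=\sum_j\bm\Theta_R^j\geq0$. Iterating the inequality therefore gives $w\le (I-\bm\Theta_R(\bm u))^{-1}\bm c_{\kappa,R}(\bm u)$, and maximizing over $\bm u\in\mathcal U^E_R$ yields $B^E_{\kappa,R}$. Finiteness of $w$, needed for the iteration, follows from the same linear-growth bound.

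\textbf{Main obstacle.} The delicate step is the circularity: the $\bm u$ in the inequality depends on the policy, while $w$ is a supremum over policies. I would handle it by applying the cycle inequality to a fixed policy $\phi$ with its own $W(\phi)$, giving $W(\phi)\le (I-\Theta)^{-1}c$ for that $\phi$'s $\bm u$, before taking any supremum. A second subtlety is that the tail beyond $R$ is bounded without regenerating. This is legitimate because, by Theorem~\ref{thm:exact_reduction}, each subsequent cost is bounded by the elapsed duration at that time, and that duration never exceeds $R+1+m$ after $m$ further steps, whether or not a success occurs.
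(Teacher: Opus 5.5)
Your proposal is correct and follows essentially the same route as the paper's proof. You exchange each model's optimal policy (whose first-$R$ table lies in $\mathcal U_R^E$ by Lemma~\ref{lem:hybrid_active}), bound the coupled discrepancy of a fixed common policy through $\bm z^{\phi}\leq \bm c_{\kappa,R}(\bm u)+\bm\Theta_R(\bm u)\bm z^{\phi}$ with the linear tail bound, invert via the nonnegative Neumann series, and reduce to $V^E(b_0)=\gamma V^E_{i_0}(1)$; your resolution of the circularity by fixing $\phi$ before taking any supremum is exactly how the paper proceeds.
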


\begin{proof}
See Appendix \ref{app:hybrid_map_bound}.
\end{proof}

Theorem \ref{thm:hybrid_map_bound} separates the two possible uses of the hybrid estimator. Choosing $\kappa\geq\kappa_{\rm MAP}$ gives an exact finite-memory implementation of the MAP rule and therefore zero value difference. Choosing $\kappa<\kappa_{\rm MAP}$ further reduces the length of the estimator table; in this case, \eqref{eq:hybrid_initial_belief_value_bound} quantifies the possible change in optimal discounted performance.

\section{Extension to Multiple Sources}
\label{sec:multi_source}
In this section, we extend the reduced single-source model to multiple independent source-monitor pairs sharing limited transmission resources. The exact state of each controlled arm, by Theorem \ref{thm:exact_reduction}, is the pair formed by the latest observed state and the elapsed no-success time. This keeps each arm countable and explicitly structured. The global state, however, is the Cartesian product of the individual states and therefore it grows exponentially with the number of sources.

To obtain a scalable scheduling rule, following standard works in the literature, we formulate the problem as a restless multi-armed bandit (RMAB), where each source is an arm. An arm is active when the monitor sends a pull request and passive otherwise. The arm is restless because its source distribution, MAP estimate, and expected AoII continue to evolve even when that arm is kept idle. The standard approach is based on Lagrangian relaxation. Instead of enforcing the activation constraint directly, we introduce a subsidy for passivity. The relaxation decouples the multi-source problem into independent single-arm problems. If the passive set of each relaxed arm grows monotonically with the subsidy, the arm is indexable and a Whittle index can be assigned to every reduced state. We first present this construction and a sufficient condition for indexability. We then describe the computation of the Whittle index and our proposed approximated policy.

\subsection{Restless Multi-Armed Bandit}
We consider $L$ independent sources. Source $j\in\{1,\dots,L\}$ has state space $\mathcal X^j=\{1,\dots,N_j\}$, transition matrix $\bm P^j$, and channel success probability $s^j\in[0,1]$. As in the single-source model, a synchronized initial state is handled separately. After the initial time slot, the source is represented by the reduced state variable
\begin{equation}
y_t^j=(i_t^j,n_t^j)\in\mathcal Y^j:=\mathcal X^j\times\mathbb N_{\geq1},
\label{eq:multisource_reduced_state}
\end{equation}
where $i_t^j$ is the latest observed source state and $n_t^j$ is the number of time slots elapsed since that observation. The action $a_t^j\in\{0,1\}$ indicates whether source $j$ is pulled. At most $M_{\rm act}<L$ sources can be active in one time slot:
\begin{equation}
\sum_{j=1}^{L}a_t^j\leq M_{\rm act},\qquad t\geq0.
\label{eq:multi_source_hard_constraint}
\end{equation}

For source $j$, let $g_i^j(n)$ be the expected AoII cost associated with its reduced state, computed from \eqref{eq:reliable_g_closed_form} using $\bm P^j$. The expected AoII cost is
\begin{equation}
d^j\bigl((i,n),a\bigr)=(1-s^j a)g_i^j(n).
\label{eq:multi_source_stage_cost}
\end{equation}
If no successful update occurs, the next state is $(i,n+1)$. If a request succeeds, state $k$ is revealed with probability $\bigl((\bm P^j)^n\bigr)_{ik}$ and the next reduced state is $(k,1)$. Hence, the multi-source scheduling problem is
\begin{mini!}
{\pi}
{\mathbb E^\pi\!\left[
\sum_{t=0}^{\infty}\gamma^t
\sum_{j=1}^{L}d^j(y_t^j,a_t^j)
\right]}
{\label{eq:multi_source_original_problem}}
{ }
\addConstraint{\sum_{j=1}^{L}a_t^j}{\leq M_{\rm act},\quad t\geq0.}
\end{mini!}
The policy observes the reduced state of every source and selects the active subset. Problem \eqref{eq:multi_source_original_problem} is an RMAB because every arm evolves under both actions.

\subsection{Lagrangian Relaxation and Whittle Index}

To decouple the arms, introduce an idle subsidy $W\in\mathbb R$. For a fixed source $j$, let $V_i^j(n,W)$ be the optimal relaxed value at state $(i,n)$. The passive and active action-value functions are
\begin{align}
Q_{0,i}^j(n,W) ={}& g_i^j(n)-W+\gamma V_i^j(n+1,W),
\label{eq:relaxed_single_arm_Q0}\\
Q_{1,i}^j(n,W) ={}& (1-s^j)g_i^j(n) + \gamma(1-s^j)V_i^j(n+1,W)\nonumber\\[-1mm]&
+ \gamma s^j\sum_{k=1}^{N_j}\bigl((\bm P^j)^n\bigr)_{ik}V_k^j(1,W).
\label{eq:relaxed_single_arm_Q1}
\end{align}
Thus,
\begin{equation}
V_i^j(n,W)=\min\bigl\{Q_{0,i}^j(n,W),Q_{1,i}^j(n,W)\bigr\}.
\label{eq:relaxed_single_arm_bellman}
\end{equation}

Define the action difference function
\begin{equation}
\Psi_i^j(n,W):=Q_{1,i}^j(n,W)-Q_{0,i}^j(n,W).
\label{eq:whittle_action_gap}
\end{equation}
Since costs are minimized, the passive action is optimal at $(i,n)$ if and only if $\Psi_i^j(n,W)\geq0$. Let
\begin{equation}
\mathcal Y_{\rm pass}^j(W) := \bigl\{(i,n)\in\mathcal Y^j:Q_{0,i}^j(n,W)\leq Q_{1,i}^j(n,W)\bigr\}
\label{eq:passive_set_definition}
\end{equation}
be the passive set under subsidy $W$. Source $j$ is indexable if, for every $W'>W$,
\begin{equation}
\mathcal Y_{\rm pass}^j(W)\subseteq\mathcal Y_{\rm pass}^j(W').
\label{eq:indexability_definition}
\end{equation}
When this property holds, the Whittle index of state $(i,n)$ is
\begin{equation}
\mathcal W^j(i,n) := \inf\bigl\{W:(i,n)\in\mathcal Y_{\rm pass}^j(W)\bigr\}.
\label{eq:whittle_index_definition}
\end{equation}
The Whittle index policy activates the $M_{\rm act}$ sources with the largest current indices.

We next provide a simple sufficient condition for indexability. It depends only on the discount factor and on the link reliability.

\begin{theorem}
\label{thm:indexability}
If
\begin{equation}
 \gamma\leq\frac{1}{1+s^j},
 \label{eq:multi_source_indexability_condition}
\end{equation}
then source $j$ is indexable.
\end{theorem}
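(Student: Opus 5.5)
To prove indexability, I will show that for every reduced state $(i,n)$ the action difference $\Psi_i^j(n,W)$ is nondecreasing in $W$. Once this holds, a state with $\Psi_i^j(n,W)\geq0$ still satisfies $\Psi_i^j(n,W')\geq0$ for $W'>W$, and this is exactly the nesting \eqref{eq:indexability_definition}. Write $\eta=1-s^j$. Subtracting \eqref{eq:relaxed_single_arm_Q0} from \eqref{eq:relaxed_single_arm_Q1} gives
\[
\Psi_i^j(n,W)= W - s^j g_i^j(n) - \gamma s^j V_i^j(n+1,W) + \gamma s^j\sum_{k}\bigl((\bm P^j)^n\bigr)_{ik}V_k^j(1,W).
\]
The cost term $g_i^j(n)$ does not depend on $W$. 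It therefore suffices to show that, for $W'>W$,
\[
\gamma s^j\Bigl|\bigl[V_i^j(n+1,W')-V_i^j(n+1,W)\bigr]-\sum_k\bigl((\bm P^j)^n\bigr)_{ik}\bigl[V_k^j(1,W')-V_k^j(1,W)\bigr]\Bigr|\leq W'-W .
\]

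The main tool will be a sensitivity estimate for $V$ with respect to the subsidy. For a fixed stationary policy $\phi$, the value $V^{\phi}(y,W)$ is affine in $W$ with slope $-\mathbb E[\sum_t\gamma^t(1-a_t)]$. The optimal value is a pointwise minimum over policies, so it is concave and nonincreasing in $W$, and its one-sided derivatives lie in $[-1/(1-\gamma),0]$. Hence each bracketed difference $V(\cdot,W)-V(\cdot,W')$ lies in $[0,(W'-W)/(1-\gamma)]$. The same holds for the $P^n$-weighted average, which is a convex combination. The difference of two quantities in $[0,D]$ is at most $D$ in absolute value, so the left side above is at most $\gamma s^j (W'-W)/(1-\gamma)$. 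This is $\leq W'-W$ exactly when $\gamma s^j\leq 1-\gamma$, i.e., $\gamma\leq 1/(1+s^j)$, which is condition \eqref{eq:multi_source_indexability_condition}. So $\Psi$ is nondecreasing in $W$, and the result follows.

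The main obstacle is making the derivative bound rigorous on the countable state space $\mathcal Y^j$ with unbounded costs $g_i^j(n)\leq n$ (Theorem~\ref{thm:exact_reduction}). I would handle it in two steps. First, restrict to the finite truncated model of Algorithm~\ref{alg:belief_space_construction}, where value iteration is a contraction and every iterate is a minimum of finitely many functions affine in $W$. Equivalently, apply the Lipschitz bound inductively: if $|V_k(\cdot,W)-V_k(\cdot,W')|\leq |W-W'|/(1-\gamma)$, then one Bellman step preserves it, since the passive branch adds $|W-W'|$ and both branches discount the rest by $\gamma$. Second, pass to the limit $H\to\infty$ using the truncation error bound of Theorem~\ref{thm:horizon_truncation}. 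That bound is uniform over $W$ in compact sets; the subsidy enters only as a bounded per-stage term, so the argument goes through with $\lambda$ replaced by $|W|$. The monotonicity of $\Psi$ passes to the limit pointwise, and this completes the proof.
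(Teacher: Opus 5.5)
Your proposal is correct and follows essentially the same route as the paper, which likewise shows $0\leq V_i(n,W)-V_i(n,W')\leq (W'-W)/(1-\gamma)$ and concludes $\Psi_i(n,W')-\Psi_i(n,W)\geq (W'-W)\bigl(1-\gamma s^j/(1-\gamma)\bigr)\geq 0$ under $\gamma\leq 1/(1+s^j)$. The truncation-and-limit step is unnecessary, and Theorem~\ref{thm:horizon_truncation} would not transfer verbatim anyway, since it only controls the error at $b_0$ for $\lambda\geq 0$. Your policy-wise argument already works directly on the countable state space: every policy has finite cost because $g_i^j(n)\leq n$, its value is affine in $W$ with slope in $[-1/(1-\gamma),0]$, and the infimum over policies inherits the monotonicity and Lipschitz bounds.
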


\begin{proof}
See Appendix \ref{app:indexability}.
\end{proof}

\subsection{Online Whittle Index Computation}
The $j$-th relaxed arm has state space $\mathcal Y^j$. For offline computation, we use the finite reduced model of Section \ref{sec:model_reduction} with a selected truncation level $H$. Let
\begin{equation}
\mathcal Y_H^j=\mathcal X^j\times\{1,\dots,H\}
\label{eq:multi_source_computational_state_space}
\end{equation}
be the truncated state space for arm $j$. The same zero-continuation boundary is used for a null transition from $n=H$. For a fixed subsidy $W$, value iteration solves the finite relaxed Bellman equation. Under indexability, the passive set is monotone in $W$, so the index of each state can be found by bisection. The output of Algorithm \ref{alg:whittle_index_computation} is a table of indices for all states in $\mathcal Y_H^j$. The computation is repeated for each source, or only once for each class of identical sources.

\begin{algorithm}[t!]
\caption{Whittle Index (Source $j$)}
\label{alg:whittle_index_computation}
\begin{algorithmic}[1]
\State Choose $H$ and construct the finite reduced-state MDP on $\mathcal Y_H^j$
\State Choose tolerance $\varepsilon>0$ and maximum number of iterations $K_{\max}$
\For{all $y\in\mathcal Y_H^j$}
    \State Choose $W_{\min}$ and $W_{\max}$ such that $y\notin\mathcal Y^j_{\rm pass}(W_{\min})$ and $y\in\mathcal Y^j_{\rm pass}(W_{\max})$
    \State Set $\underline W(y)\gets W_{\min}$ and $\overline W(y)\gets W_{\max}$
    \For{$k=1,\dots,K_{\max}$}
        \State Set $W^{(k)}(y)\gets(\underline W(y)+\overline W(y))/2$
        \State Solve the finite relaxed Bellman equation using subsidy $W^{(k)}(y)$
        \If{$y\in\mathcal Y^j_{\rm pass}(W^{(k)}(y))$}
            \State $\overline W(y)\gets W^{(k)}(y)$
        \Else
            \State $\underline W(y)\gets W^{(k)}(y)$
        \EndIf
        \If{$\overline W(y)-\underline W(y)<\varepsilon$}
            \State \textbf{break}
        \EndIf
    \EndFor
    \State Set $\mathcal W^j(y)\gets(\underline W(y)+\overline W(y))/2$
\EndFor
\end{algorithmic}
\end{algorithm}

The online implementation requires only the index tables and the belief associated to each source. At the beginning of slot $t$, the monitor looks up one index per source and sends pull requests to the $M_{\rm act}$ sources with the highest index. After the transmission outcomes are observed, the beliefs are updated directly: a successful request revealing state $k$ starts a new no-success cycle at $(k,1)$, whereas an idle slot or a failed request increments the elapsed-duration coordinate. Algorithm \ref{alg:whittle_index_online} summarizes this procedure. Ties between equal indices are resolved according to a fixed deterministic ordering. Notice that the monitor never needs to observe the physical state or the instantaneous AoII of a source unless a requested transmission succeeds; a null observation covers both an idle source and an unsuccessful request and therefore produces the same update of the reduced state.

\begin{algorithm}[t!]
\caption{Online Whittle Index Policy}
\label{alg:whittle_index_online}
\begin{algorithmic}[1]
\State Initialize $y_0^j=(i_0^j,n_0^j)\in\mathcal Y^j$ from the latest successful observation, for every $j$
\For{$t=0,1,2,\dots$}
    \For{$j=1,\dots,L$}
        \State Retrieve $w_t^j\gets\mathcal W^j(y_t^j)$ from the index table
    \EndFor
    \State Sort $\{w_t^j\}_{j=1}^{L}$ in decreasing order
    \State Let $\mathcal J_t$ be the set of the $M_{\rm act}$ sources with the largest indices
    \For{$j=1,\dots,L$}
        \State Set $a_t^j\gets\mathbbm{1}\{j\in\mathcal J_t\}$
    \EndFor
    \State Apply $\bm a_t=(a_t^1,\dots,a_t^L)$ and collect the observations $\{o_t^j\}_{j=1}^{L}$
    \For{$j=1,\dots,L$}
        \If{$o_t^j=k$ for some $k\in\mathcal X^j$}
            \State Set $y_{t+1}^j\gets(k,1)$
        \Else
            \State Set $y_{t+1}^j\gets(i_t^j,n_t^j+1)$
        \EndIf
    \EndFor
\EndFor
\end{algorithmic}
\end{algorithm}

\subsection{Approximate Whittle Index Policy}
The offline computation of the Whittle index can be expensive because every state requires a subsidy bisection and every bisection step solves a Bellman equation. Furthermore, although simple, the sufficient condition provided in Theorem \ref{thm:indexability} can be restrictive, and indexability in general does not hold when that condition is not satisfied. To overcome these difficulties, we here propose an approximate Whittle index policy (AWIP) that computes exact indices only on representative reduced states and interpolates the remaining values. This heuristic can be used even under non-indexable regimes and proves to be well-performing in numerical simulations.

For source $j$, define the instantaneous update gain
\begin{equation}
\begin{aligned}
z^j(i,n) &
:= d^j\bigl((i,n),0\bigr) - d^j\bigl((i,n),1\bigr)\\&
= s^j g_i^j(n), \qquad (i,n)\in\mathcal Y_H^j.
\label{eq:awip_urgency_coordinate}
\end{aligned}
\end{equation}
This scalar is the expected immediate AoII reduction obtained by activating the source. Larger values correspond to states with larger expected AoII and/or a more reliable link.

Fix a number of anchors $K^j\leq|\mathcal Y_H^j|$. Select
\[
\mathcal G^j=\{y_1^j,\dots,y_{K^j}^j\}\subseteq\mathcal Y_H^j
\]
so that the coordinates $z^j(y_\ell^j)$ cover the range of $z^j$ over the finite state set. Write
\[
z_1^j\leq\cdots\leq z_{K^j}^j,
\qquad z_\ell^j:=z^j(y_\ell^j),
\]
with repeated coordinates removed. Exact indices are computed only at the anchor states. For a non-anchor state $y$ with $z^j(y)\in[z_\ell^j,z_{\ell+1}^j]$, define
\begin{align}
\widehat{\mathcal W}^j(y) ={}&
\mathcal W^j(y_\ell^j) + \frac{z^j(y)-z_\ell^j}{z_{\ell+1}^j-z_\ell^j}\nonumber\\[-1mm]&
\quad{} \left(\mathcal W^j(y_{\ell+1}^j) - \mathcal W^j(y_\ell^j) \right).
\label{eq:awip_linear_interpolation}
\end{align}
At an anchor state,
\begin{equation}
\widehat{\mathcal W}^j(y_\ell^j)=\mathcal W^j(y_\ell^j).
\label{eq:awip_anchor_exact}
\end{equation}
Thus, the number of exact bisection computations is reduced from $N^jH$ to $K^j$.

\begin{algorithm}[t!]
\caption{Approximate Whittle Index (Source $j$)}
\label{alg:aoii_grid_awip_computation}
\begin{algorithmic}[1]
\State Choose $H$ and construct $\mathcal Y_H^j$
\State Choose the number of anchors $K^j$
\For{all $y=(i,n)\in\mathcal Y_H^j$}
    \State Compute $z^j(y)=s^jg_i^j(n)$
\EndFor
\State Choose anchor states $\mathcal G^j=\{y_1^j,\dots,y_{K^j}^j\}$ as quantile states of $z^j(y)$
\State Sort the anchors so that $z_1^j\leq\cdots\leq z_{K^j}^j$
\For{all $y_\ell^j\in\mathcal G^j$}
    \State Compute $\mathcal W^j(y_\ell^j)$ through Algorithm \ref{alg:whittle_index_computation}
    \State Set $\widehat{\mathcal W}^j(y_\ell^j)\gets\mathcal W^j(y_\ell^j)$
\EndFor
\For{all $y\in\mathcal Y_H^j\setminus\mathcal G^j$}
    \If{$z^j(y)\leq z_1^j$}
        \State Set $\widehat{\mathcal W}^j(y)\gets\mathcal W^j(y_1^j)$
    \ElsIf{$z^j(y)\geq z_{K^j}^j$}
        \State Set $\widehat{\mathcal W}^j(y)\gets\mathcal W^j(y_{K^j}^j)$
    \Else
        \State Find $\ell$ such that $z_\ell^j\leq z^j(y)\leq z_{\ell+1}^j$
        \State Compute $\widehat{\mathcal W}^j(y)$ from \eqref{eq:awip_linear_interpolation}
    \EndIf
\EndFor
\end{algorithmic}
\end{algorithm}

The interpolation is intentionally one-dimensional. Distinct reduced states may have the same expected immediate update gain and are then assigned the same approximate index. This loss of information is the price paid for reducing the offline computation. Nevertheless, $z^j(i,n)=s^jg_i^j(n)$ is directly tied to the immediate benefit of a successful update and therefore gives an interpretable priority coordinate. In the indexable regime, the anchors carry exact Whittle indices and the AWIP is a low-complexity surrogate of the WIP. Outside the guaranteed indexable regime, the same construction can be used as a heuristic priority policy, provided the anchor values are interpreted accordingly.

No separate online algorithm is needed for the AWIP. Algorithm \ref{alg:whittle_index_online} is reused without changing the state tracking, scheduling, observation, or state-update steps; in the lookup step, $\mathcal W^j(y_t^j)$ is simply replaced by the approximate value $\widehat{\mathcal W}^j(y_t^j)$. Hence, the exact and approximate policies have the same online implementation and differ only in the index tables constructed offline.

\section{Numerical Results}\label{sec:numerical_results}
In this section, we present numerical results for both single-source and multi-source problems. First, we provide a visual representation of the optimal policy on the belief space. Second, we analyze the performance of our proposed persistent policy. Third, we compare the proposed multi-source index policies with a random scheduling baseline.

\subsection{Single-Source Experiments}
We first consider the single-source problem solved through the truncated MDP of Section \ref{sec:model_reduction}. The AoII truncation level is set to $H=25$, and the source has $N=5$ possible states. The discount factor, the success probability and the transmission cost are $\gamma=0.9$, $s=0.8$ and $\lambda=1.5$. We consider three different sources labeled \emph{volatile}, \emph{stable-a}, and \emph{stable-b}. The three classes represent different degrees of temporal persistence of the underlying Markov process. The transition matrices used for the three classes are
\[
\begin{aligned}
\bm P_{\mathrm{volatile}}
&=
\begin{bmatrix}
0.500 & 0.200 & 0.120 & 0.100 & 0.080\\
0.060 & 0.250 & 0.320 & 0.220 & 0.150\\
0.180 & 0.080 & 0.400 & 0.220 & 0.120\\
0.280 & 0.120 & 0.120 & 0.300 & 0.180\\
0.220 & 0.240 & 0.080 & 0.180 & 0.280
\end{bmatrix},
\\[0.5em]
\bm P_{\mathrm{stable\text{-}a}}
&=
\begin{bmatrix}
0.920 & 0.035 & 0.020 & 0.015 & 0.010\\
0.010 & 0.920 & 0.035 & 0.020 & 0.015\\
0.015 & 0.040 & 0.920 & 0.010 & 0.015\\
0.030 & 0.010 & 0.035 & 0.920 & 0.005\\
0.025 & 0.010 & 0.015 & 0.030 & 0.920
\end{bmatrix},
\\[0.5em]
\bm P_{\mathrm{stable\text{-}b}}
&=
\begin{bmatrix}
0.870 & 0.055 & 0.035 & 0.025 & 0.015\\
0.015 & 0.870 & 0.055 & 0.035 & 0.025\\
0.030 & 0.050 & 0.870 & 0.030 & 0.020\\
0.045 & 0.020 & 0.050 & 0.870 & 0.015\\
0.060 & 0.015 & 0.030 & 0.025 & 0.870
\end{bmatrix}.
\end{aligned}
\]

Each point in Figures \ref{fig:single_source_volatile}--\ref{fig:single_source_stable-b} corresponds to a reachable belief $b\in\mathcal B_{D,H}$ generated by the truncated recursion \eqref{eq:belief_recursion}. Since a belief is a distribution over the source state and the AoII, it cannot be represented directly in two dimensions. We therefore project each belief onto the pair $\left(\bar\Delta(b),p_{\max}(b)\right)$, with
\begin{equation}
\bar\Delta(b):=\sum_{i=1}^{N}\sum_{\Delta=0}^{D}\Delta b(i,\Delta),
\qquad
p_{\max}(b):=\max_{i\in\{1,\dots,N\}}\pi_b(i).
\label{eq:numerical_projection_coordinates}
\end{equation}
The first coordinate is the expected prior AoII. The second coordinate measures the confidence of the MAP estimate. The marker associated with each belief represents the optimal action
\[
a^\star(b)\in\arg\min_{a\in\{0,1\}}Q^\lambda(b,a).
\]
The range of $Q^\lambda(b,1)-Q^\lambda(b,0)$ is also reported in each figure: negative values favor pull requests and positive values favor idling.

\begin{figure}[t!]
    \centering
    \includegraphics[width=0.5\columnwidth]{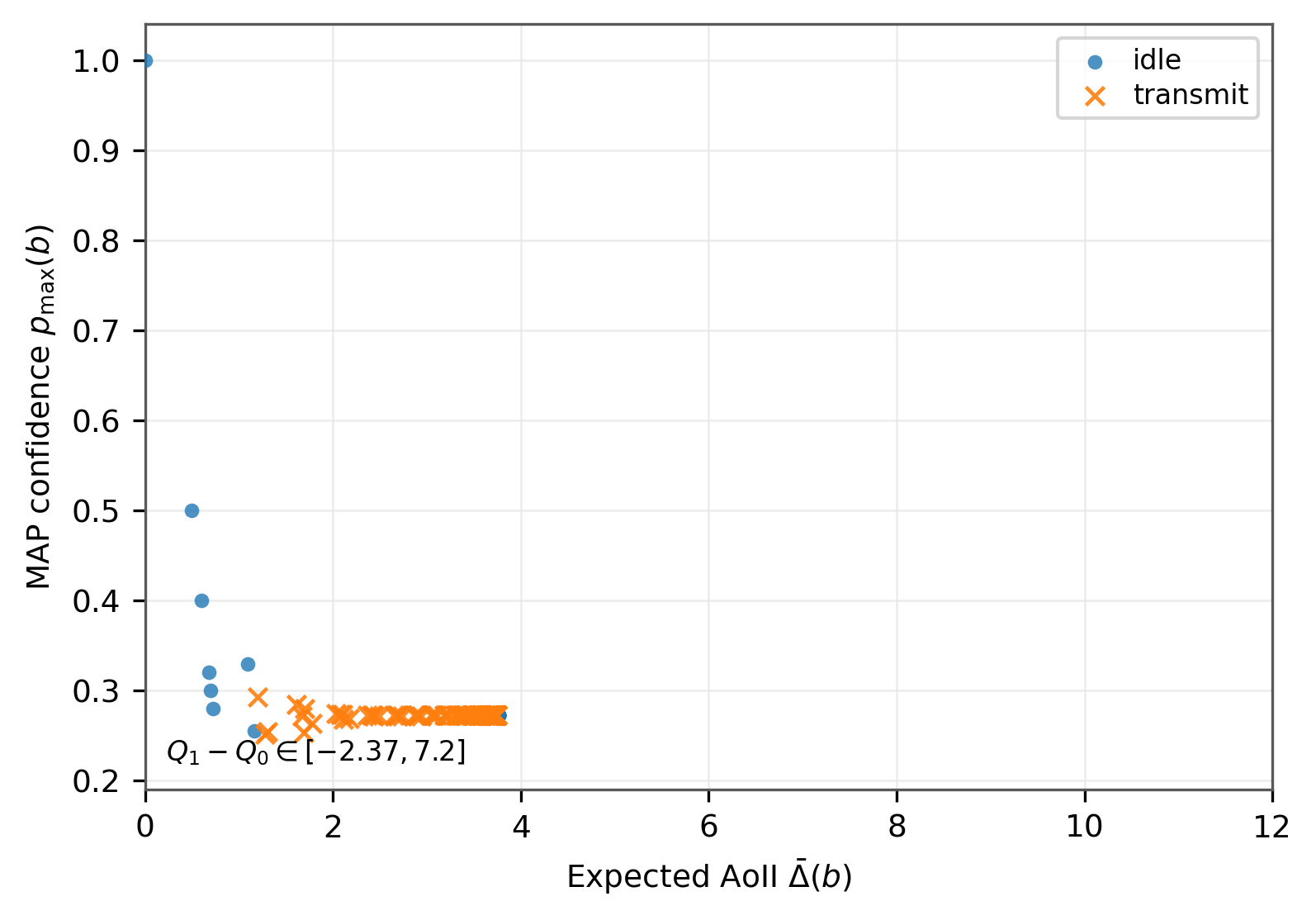}
    \caption{Projection of the optimal single-source action over reachable beliefs for the \emph{volatile} source.}
    \label{fig:single_source_volatile}
\end{figure}

\begin{figure}[!t]
    \centering
    \includegraphics[width=0.5\columnwidth]{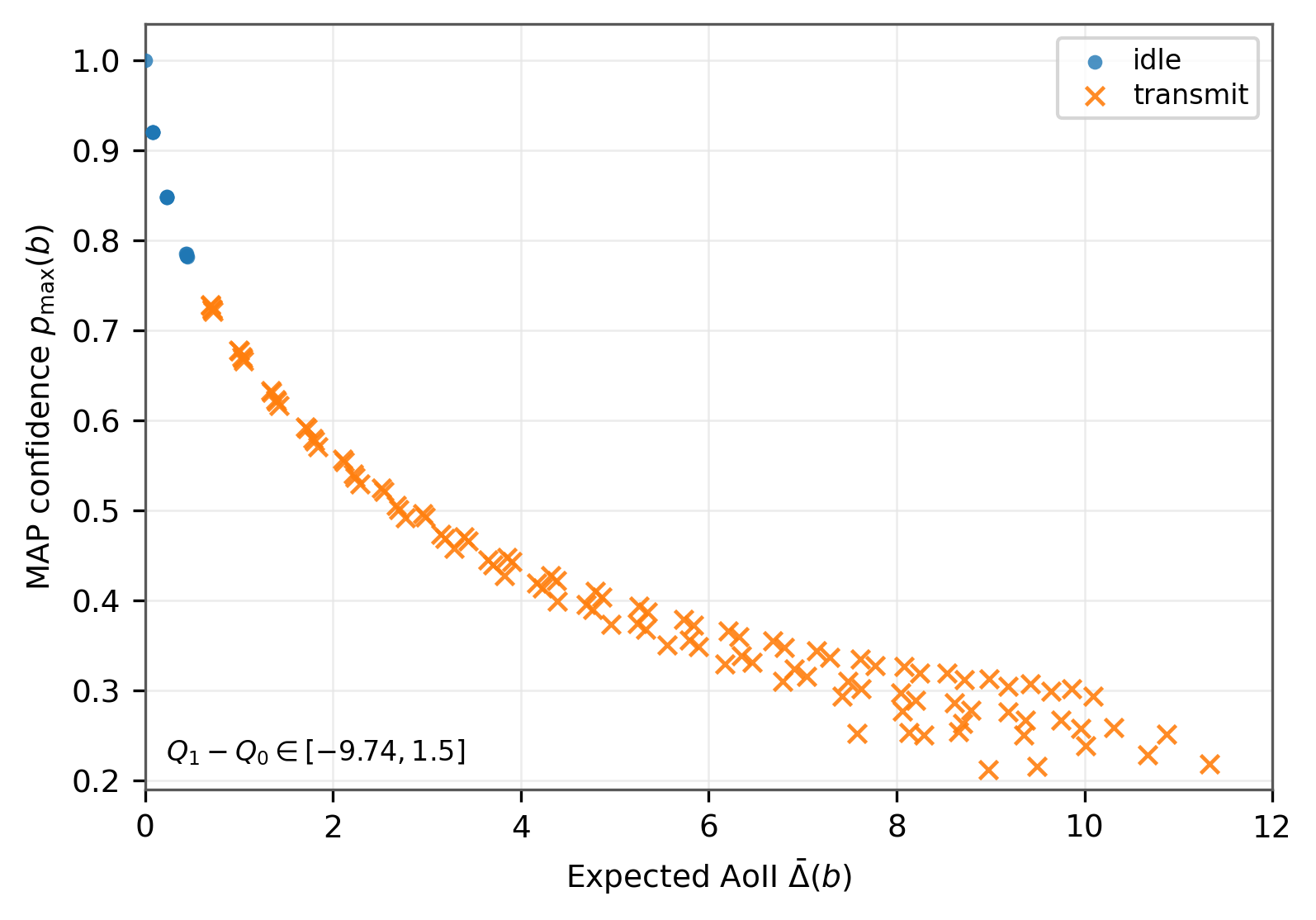}
    \caption{Projection of the optimal single-source action over reachable beliefs for the \emph{stable-a} source.}
    \label{fig:single_source_stable-a}
\end{figure}

\begin{figure}[t!]
    \centering
    \includegraphics[width=0.5\columnwidth]{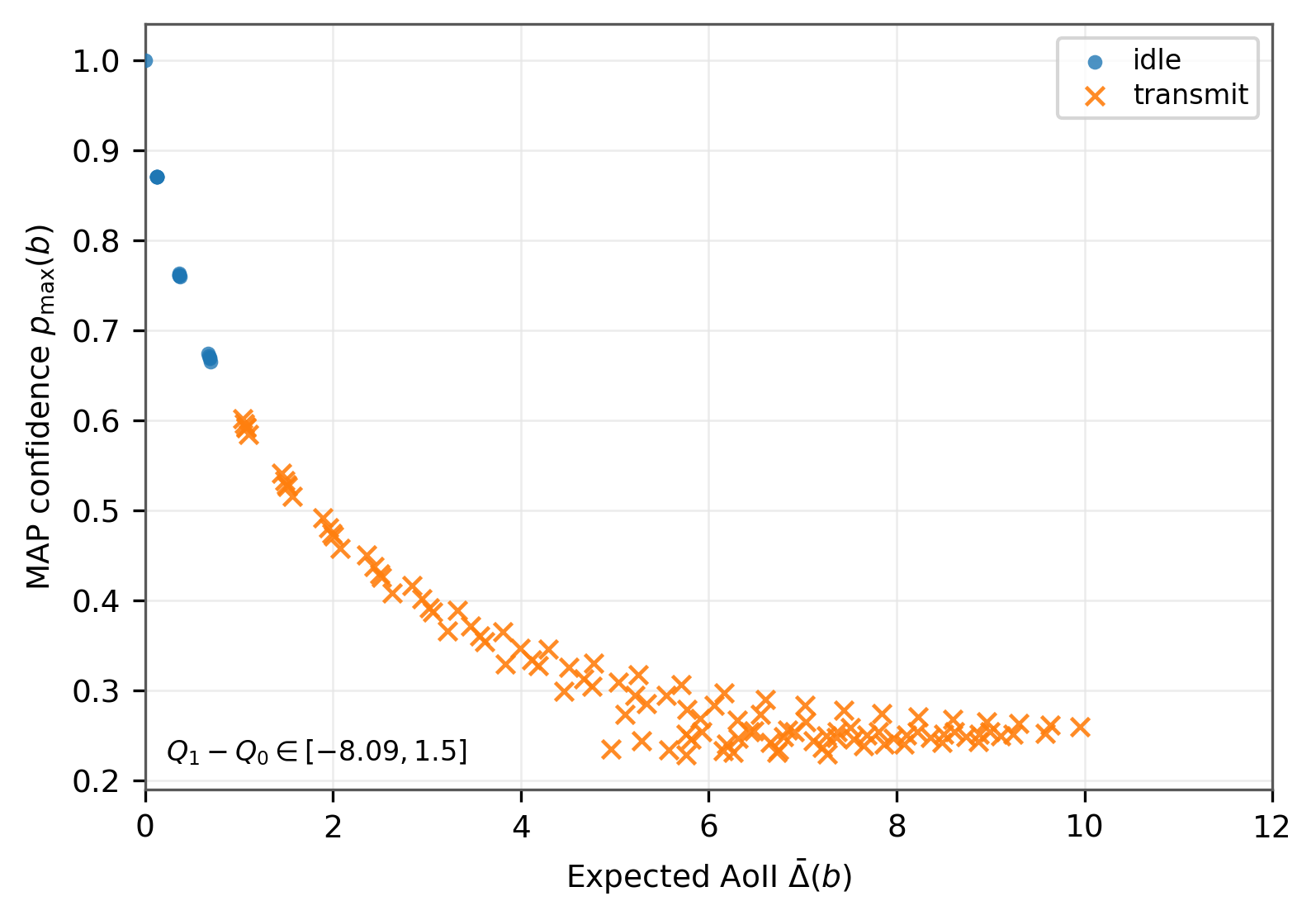}
    \caption{Projection of the optimal single-source action over reachable beliefs for the \emph{stable-b} source.}
    \label{fig:single_source_stable-b}
\end{figure}

Figures \ref{fig:single_source_volatile}--\ref{fig:single_source_stable-b} show that the optimal policy has an interpretable switching behavior in the projected belief space. Beliefs with low expected AoII and high MAP confidence are typically kept idle, because the monitor is already likely to be synchronized with the source and the immediate value of an update is small. As $\bar\Delta(b)$ increases, requesting a transmission becomes optimal over most of the reachable set, since a successful update can reset the AoII and improve the future estimate. The two stable sources exhibit a clean separation between idle and transmit regions. This happens because, under more persistent transition dynamics, the reachable beliefs tend to evolve along a more regular low-dimensional curve: as the expected AoII increases, the MAP confidence decreases smoothly, and the optimal action changes in an almost threshold-like way. In the volatile case, the reachable beliefs are more compressed in the low-confidence region, and the two-dimensional projection hides more information about the full belief. It is important to stress that Figures \ref{fig:single_source_volatile}--\ref{fig:single_source_stable-b} are projections of the belief-MDP policy, not a proof that the optimal policy is characterized by a scalar threshold. Two beliefs with similar values of $\bar\Delta(b)$ and $p_{\max}(b)$ may have different distributions over the non-MAP states and over the AoII values, and this residual information can affect the continuation value. Nevertheless, the numerical results indicate that the expected AoII is a strong explanatory coordinate for the optimal decision, while $p_{\max}(b)$ helps interpret how the uncertainty of the MAP estimator changes across the reachable beliefs.

In a second analysis, we also validate the persistent policy described in Section \ref{sec:unreliable_persistent_policy} in the same single-source setting. For each source, the waiting-time vector $m=(m_1,\ldots,m_N)$ is computed offline and then used online as a simple table: after a successful observation of state $i$, the monitor remains idle until time reaches $m_i$, and then transmits persistently until the next successful update. Table \ref{tab:single_source_persistent_validation} compares the resulting persistent policy with the optimal policy and with an always-transmit policy, which transmits at every time step. The persistent policy remains close to the optimal truncated policy for all sources, while requiring only a waiting time table rather than the full belief-MDP action table. The relative loss with respect to the simulated optimal policy is negligible. These results confirm the high performance of our proposed heuristic as a suboptimal alternative to the optimal policy.

\begin{table}[ht]
\centering
\caption{Single-source validation of the persistent policy.}
\label{tab:single_source_persistent_validation}
\resizebox{0.5\columnwidth}{!}{%
\begin{tabular}{llrrr}
\toprule
Source class & Policy & $J$ & $\bar J$ & 95\% CI half-width for $J$ \\
\midrule
\emph{stable-a} & Persistent & 5.2938 & 0.5294 & 0.0091 \\
\emph{stable-a} & Optimal & 5.2929 & 0.5293 & 0.0092 \\
\emph{stable-a} & Always transmit & 13.7090 & 1.3709 & 0.0008 \\
\midrule
\emph{stable-b} & Persistent & 6.3801 & 0.6380 & 0.0091 \\
\emph{stable-b} & Optimal & 6.3793 & 0.6379 & 0.0091 \\
\emph{stable-b} & Always transmit & 13.8329 & 1.3833 & 0.0013 \\
\midrule
\emph{volatile} & Persistent & 10.9175 & 1.0918 & 0.0081 \\
\emph{volatile} & Optimal & 10.9122 & 1.0912 & 0.0083 \\
\emph{volatile} & Always transmit & 14.8665 & 1.4867 & 0.0026 \\
\bottomrule
\end{tabular}
}
\end{table}

\subsection{Multi-Source Experiments}
We now evaluate the multi-source scheduling rules. We compare the Whittle index policy (WIP), the approximate Whittle index policy (AWIP), and a random policy that selects the active sources uniformly at random. We use the same parameter set as in the single-source analysis, with $M_{\mathrm{act}}=2$ sources to be selected at each time step, out of $L=9$ total sources assigned equally to the $3$ stability classes described in the single-source case. The performance metric is the average normalized discounted cost per source:
\begin{equation}
\bar J := \frac{1-\gamma}{L}\,\mathbb E\Big[\sum_{t=0}^{\infty}\gamma^t\sum_{j=1}^{L}d^j(b_t^j,a_t^j)\Big],
\label{eq:numerical_normalized_multisource_cost}
\end{equation}
where the factor $1-\gamma$ puts the discounted cost on the same scale as the cost function. For the AWIP, the approximate index table is built using $10$ anchor beliefs, while the exact index table contains $122$ beliefs. The infinite-horizon sum is approximated by truncating the horizon at $K_{\max}=200$, and every experiment is repeated $10000$ times. We first fix $\gamma=0.55$ to satisfy the sufficient condition for indexability described in Theorem \ref{thm:indexability}. We then fix $\gamma=0.9$ to analyze the case beyond the indexability condition.

\begin{table}[ht]
\centering
\caption{Multi-source policy comparison in the indexable regime.}
\label{tab:multisource_indexable}
\resizebox{0.5\columnwidth}{!}{%
\begin{tabular}{lcccc}
\toprule
Policy & $J$ & $\bar J$ & $95\%$ CI half-width for $J$ & Reduction vs. random \\
\midrule
WIP    & $3.7505$ & $0.1875$ & $0.0032$ & $25.34\%$ \\
AWIP   & $3.7529$ & $0.1876$ & $0.0032$ & $25.29\%$ \\
Random & $5.0232$ & $0.2512$ & $0.0078$ & $--$ \\
\bottomrule
\end{tabular}
}
\end{table}

\begin{table}[ht]
\centering
\caption{Multi-source policy comparison beyond the indexable regime.}
\label{tab:multisource_non_indexable}
\resizebox{0.5\columnwidth}{!}{%
\begin{tabular}{lcccc}
\toprule
Policy & $J$ & $\bar J$ & $95\%$ CI half-width for $J$ & Reduction vs. random \\
\midrule
AWIP   & $51.4835$ & $0.5720$ & $0.0662$ & $38.23\%$ \\
Random & $83.3463$ & $0.9261$ & $0.1452$ & $--$ \\
\bottomrule
\end{tabular}
}
\end{table}

Table \ref{tab:multisource_indexable} shows that the two index-based policies achieve almost identical performance in this experiment. The small numerical difference between WIP and AWIP is negligible. Both policies reduce the normalized discounted cost by about $25\%$ with respect to the random policy. Table \ref{tab:multisource_non_indexable} shows that the proposed AWIP performs significantly better than the baseline even beyond the indexable regime. These results support the use of AWIP as a low-complexity alternative for WIP. In this experiment, AWIP uses only $30$ anchor beliefs instead of computing an exact index for all $366$ reachable belief states. Therefore, the online implementation remains identical to WIP, since both policies simply rank the current sources by their indices, but the offline construction of the approximate table is substantially reduced in terms of the number of states where the exact index computation is required.

\section{Conclusions}\label{sec:conclusion}
In this paper, we studied pull-based remote state estimation of a Markovian source under an AoII performance criterion. Since the monitor cannot observe the source state nor the instantaneous AoII, we first formulated the problem through the joint source-AoII belief and derived the corresponding discounted belief-MDP. We then exploited the particular information structure of the model. Because a successful transmission reveals the state exactly and a null observation is uninformative about the source, every reachable belief is determined by the latest observed state and the elapsed no-success duration. This removes the need to operate on a generic belief simplex. We introduced a finite approximation by truncating the elapsed duration, derived an explicit bound on the resulting error, and provided a principled rule for selecting the truncation level. For reliable links, we showed that an optimal policy is completely described by a table of waiting times. For unreliable links, we proposed a persistent policy and derived computable performance certificates based on regenerative upper and lower bounds. We also established finite-time stabilization of the MAP estimate and introduced a hybrid estimator that replaces the transient tail by a stationary estimate, together with a bound on the resulting change in optimal value. Finally, we extended the exact reduced representation to multiple sources and formulated the scheduling problem as a restless multi-armed bandit. We derived a sufficient condition for indexability and proposed an approximate Whittle index policy based on interpolation over a small set of anchor states. The resulting framework separates the hidden physical dynamics from the observable control state and provides exact structural reductions together with finite, controllable numerical approximations.

\bibliographystyle{IEEEtran}
\bibliography{references}

\appendices

\section{Proof of Theorem \ref{thm:exact_reduction}}
\label{app:exact_reduction}
We first prove that the set $\{b_{i,n}\}$ is closed under every possible observation. Suppose that the current belief is $b_{i,n}$. If a transmission succeeds and reveals state $k$, the posterior belief is the synchronized belief $b^k$. Therefore,
\[
F(b_{i,n},k)=F(b^k,k)=b_{k,1}.
\]
If no successful transmission occurs, the observation is $\varnothing$. Under the idle action this observation occurs with probability one, whereas under the active action its probability is $1-s$. The null observation conveys no additional information about the source. Furthermore, for every reachable belief, the mass associated with the current MAP estimate already has zero AoII, so that $U(b_{i,n},\varnothing)=b_{i,n}$. Consequently, we have $F(b_{i,n},\varnothing)=b_{i,n+1}$. At the synchronized initial belief $b_0$, the only possible observed state is $i_0$, and both a null observation and a successful observation lead, after the source transition, to $b_{i_0,1}$. The two identities above imply by induction that every reachable prior belief other than $b_0$ is of the form $b_{i,n}$. The pair $(i,n)$ is observable because it records the latest observed state and the subsequent time without a success.

We next derive the source marginal distribution. Let
\[
\pi_{i,n}(x):=\sum_{\Delta=0}^{\infty}b_{i,n}(x,\Delta).
\]
From the definition of $b_{i,1}$,
\[
\pi_{i,1}(x)=P_{ix}=(e_i\bm P)_x.
\]
Since a null observation does not change the a posteriori source distribution, the prediction step in \eqref{eq:belief_recursion} gives
\[
\begin{aligned}
\pi_{i,n+1}(x)
&=\sum_{m=1}^{N}\sum_{\delta=0}^{\infty}
 b_{i,n}(m,\delta)P_{mx}\\
&=\sum_{m=1}^{N}\pi_{i,n}(m)P_{mx}.
\end{aligned}
\]
Thus, by induction,
\[
\pi_{i,n}=e_i\bm P^n=p_i(n),
\]
and the corresponding MAP estimate is $\hat x_i(n)$ in \eqref{eq:reduced_map_estimate}.

It remains to verify the expression for the expected AoII. Consider a no-success cycle that starts immediately after state $i$ is observed, and let $X_h$ denote the source state in its $h$-th subsequent time slot. In the absence of another successful update, the estimate in time slot $h$ is $\hat x_i(h)$. Since the AoII is reset whenever $X_h=\hat x_i(h)$, its posterior value in time slot $n$ satisfies the identity
\begin{equation}
\label{eq:path_identity}
\Delta_n^+ =
\sum_{r=1}^{n} \prod_{h=r}^{n} \mathbbm{1}\{X_h\neq \hat x_i(h)\},
\end{equation}
which counts the number of consecutive mismatched slots ending at time slot $n$. Taking expectations conditional on the latest observed state being $i$ yields
\[
\begin{aligned}
g_i(n)
&=\sum_{r=1}^{n}
\Pr_i\!\left(
X_h\neq \hat x_i(h),\ h=r,\ldots,n
\right)\\
&=\sum_{r=1}^{n}
 e_i\bm P^r\bm D_i(r)\bm P\bm D_i(r+1)
 \cdots \bm P\bm D_i(n)\mathbf 1,
\end{aligned}
\]
where the second equality follows from the Markov property and the definition of the mismatch matrices. This proves \eqref{eq:reliable_g_closed_form}. Finally, the identity \eqref{eq:path_identity} gives $0\leq \Delta_n^+\leq n$, and hence $0\leq g_i(n)\leq n$.

Therefore, the full reachable belief, the current expected cost, and the conditional law of the next belief under either action are all determined by the observable pair $(i,n)$. This proves that $(i,n)$ is a sufficient state descriptor.\qed

\section{Proof of Theorem \ref{thm:horizon_truncation}}
\label{app:horizon_truncation}
We first specify the finite model. Inside $\mathcal Y_H$, the costs and transitions coincide with those of the exact reduced-state MDP. For $n<H$,
\begin{equation}
\begin{aligned}
T_H\big((i,n+1)|(i,n),a\big)&=1-sa,\\
T_H\big((k,1)|(i,n),a\big)&=sa(\bm P^n)_{ik},
\label{eq:finite_reduced_transitions_interior}
\end{aligned}
\end{equation}
while at the boundary
\begin{equation}
\begin{aligned}
T_H\big(\partial|(i,H),a\big)&=1-sa,\\
T_H\big((k,1)|(i,H),a\big)&=sa(\bm P^H)_{ik}.
\label{eq:finite_reduced_transitions_boundary}
\end{aligned}
\end{equation}
The terminal state $\partial$ has zero stage cost and zero continuation value. Therefore, if $V_{i,H}(n)$ is the finite-model value, its Bellman equation is
\begin{align}
V_{i,H}(n)=\min_{a\in\{0,1\}}\Big\{&
(1-sa)g_i(n)+\lambda a\nonumber\\
&+\gamma(1-sa)\mathbbm{1}\{n<H\}V_{i,H}(n+1)\nonumber\\
&+\gamma sa\sum_{k=1}^{N}(\bm P^n)_{ik}V_{k,H}(1)
\Big \},
\label{eq:finite_reduced_bellman}
\end{align}
and the value from the synchronized initial condition is
\begin{equation}
V_H^*(b_0):=\gamma V_{i_0,H}(1).
\label{eq:finite_reduced_initial_value}
\end{equation}

We next construct a uniform bound on the continuation omitted at the boundary. A positive AoII can persist for $m$ consecutive time slots only if the source avoids the estimate during that whole time interval. The conditional probability of avoiding any prescribed estimate in one transition is at most $q$. Therefore,
\begin{equation}
g_i(n)\leq A_q(n):=\sum_{m=1}^{n}q^m.
\label{eq:g_markov_bound}
\end{equation}

To bound the continuation beyond the truncation boundary, we consider a policy that transmits in every time slot until an update succeeds and continues to use the same rule after every subsequent success. Since the problem is a cost minimization problem, the optimal continuation value cannot exceed the cost incurred by this always-transmit policy.

Under the always-transmit policy, the current cycle ends at the first successful update. If the first $r$ transmission attempts fail, the probability of reaching the next attempt is $\eta^r$, while its cost is discounted by $\gamma^r$. The resulting discounted factor is therefore $\beta^r=(\gamma\eta)^r$. After that, the transmission cost $\lambda$ is always paid, while the AoII cost is paid only if the attempt also fails and is bounded by $\eta A_q(n+r)$. Hence, starting from a no-success duration $n$, the discounted cost accumulated before the next successful update is bounded by
\begin{equation}
\sum_{r=0}^{\infty}\beta^r\bigl(\lambda+\eta A_q(n+r)\bigr) = \overline C_n.
\label{eq:Cbar_AT_series_representation}
\end{equation}

The discounted weight associated with the new cycle that starts after the next successful update is
\begin{equation}
\Gamma^{\rm AT} := \sum_{r=0}^{\infty}\gamma^{r+1}\eta^r s = \frac{\gamma s}{1-\beta}.
\label{eq:finite_AT_regeneration_weight}
\end{equation}
Indeed, the event that the next success occurs after exactly $r$ failed attempts has probability $\eta^r s$, and the continuation value of the new cycle is discounted by $\gamma^{r+1}$.

To obtain the closed form in \eqref{eq:Cbar_AT_closed_form}, first consider $0\leq q<1$, for which
\[
A_q(n+r)=\sum_{m=1}^{n+r}q^m=\frac{q\bigl(1-q^{n+r}\bigr)}{1-q}.
\]
Therefore,
\begin{align*}
\overline C_n &=
\lambda\sum_{r=0}^{\infty}\beta^r+\frac{\eta q}{1-q} \left(\sum_{r=0}^{\infty}\beta^r-q^n\sum_{r=0}^{\infty}(\beta q)^r\right)\\&
= \frac{\lambda}{1-\beta}+\frac{\eta q}{1-q} \left (\frac{1}{1-\beta}-\frac{q^n}{1-\beta q}\right).
\end{align*}
For $q=1$, $A_1(n+r)=n+r$, and hence
\begin{align*}
\overline C_n &=
(\lambda+\eta n)\sum_{r=0}^{\infty}\beta^r+\eta\sum_{r=0}^{\infty}r\beta^r\\&
= \frac{\lambda+\eta n}{1-\beta}+\frac{\eta\beta}{(1-\beta)^2}.
\end{align*}
These are exactly the two cases in \eqref{eq:Cbar_AT_closed_form}.

It remains to account for the continuation after the first successful update. After a success, the process regenerates at some state $(k,1)$. Let $W_k^{\rm AT}$ be the always-transmit value starting from $(k,1)$ and define
\[
W^{\rm AT}:=\max_{k\in\mathcal X}W_k^{\rm AT}.
\]
The cost incurred before the following success is bounded by $\overline C_1$, while the continuation after that success has discounted weight at most $\Gamma^{\rm AT}$. Therefore,
\[
W^{\rm AT}\leq \overline C_1+\Gamma^{\rm AT}W^{\rm AT},
\]
which gives the uniform bound
\begin{equation}
W^{\rm AT}\leq\overline W:=\frac{\overline C_1}{1-\Gamma^{\rm AT}}.
\label{eq:Wbar_AT}
\end{equation}

The first state omitted by the truncation is $(i,H+1)$. Starting from this state, $\overline C_{H+1}$ bounds the remainder of the current no-success cycle. If this cycle ends with a successful update, all subsequent regeneration cycles contribute at most $\overline W$, with discounted weight $\Gamma^{\rm AT}$. Hence the entire continuation from the first omitted state is bounded by
\begin{align*}
\overline C_{H+1}+\Gamma^{\rm AT}\overline W &=
\overline C_{H+1}+\frac{\gamma s}{1-\gamma}\overline C_1\\&
= \overline M_H,
\end{align*}
where the first equality follows from $1-\Gamma^{\rm AT}=(1-\gamma)/(1-\beta)$. The first term covers the remainder of the current cycle, while the second covers all subsequent regeneration cycles.

All stage costs are nonnegative, and the only difference between the two models is that the finite one discards this continuation after a null boundary transition. Thus,
\[
V_H^*(b_0)\leq V^*(b_0).
\]

Let $\phi_H^*$ be optimal for the finite model and apply it in the exact MDP until the first time $\tau_H$ at which a null transition from $(i,H)$ reaches $(i,H+1)$. From that time onwards, transmit in every time slot according to the always-transmit policy described above. The two systems have identical states, actions, and costs before $\tau_H$. Moreover, the first omitted state cannot be reached from $b_0$ before time $H+1$, so $\tau_H\geq H+1$ almost surely.

At the exit time, the exact state is $(i,H+1)$ for some $i$, and its continuation under the always-transmit rule is at most $\overline M_H$. The extended policy is feasible for the exact problem, and therefore
\begin{align*}
V^*(b_0)
&\leq V_H^*(b_0)+\mathbb E\big[\gamma^{\tau_H}\big]\overline M_H\\
&\leq V_H^*(b_0)+\gamma^{H+1}\overline M_H.
\end{align*}
Combining this inequality with $V_H^*(b_0)\leq V^*(b_0)$ proves the claim.\qed

\section{Proof of Theorem \ref{thm:reliable_waiting_table}}
\label{app:reliable_waiting_table}
The general Bellman equation \eqref{eq:general_reduced_bellman} becomes
\begin{equation}
\begin{aligned}
V_i(n)=\min\Big \{& g_i(n)+\gamma V_i(n+1),\\[-1mm]&
\lambda + \gamma\sum_{k=1}^{N}(\bm P^n)_{ik}V_k(1) \Big\},
\end{aligned}\label{eq:reliable_bellman}
\end{equation}
for $i\in\mathcal X$ and  $n\geq1$. The first branch corresponds to waiting one more time slot. The second branch transmits immediately, pays only the sampling cost in the current time slot, and starts a new cycle after observing state $k$. Let
\[
W_i=V_i(1), \qquad i\in\mathcal X,
\]
be the optimal value at the first decision slot after a successful observation of state $i$. Starting from $(i,1)$, any deterministic policy either transmits immediately, waits for some number of idle slots and then transmits, or never transmits. Hence, for a proposed vector $W\in\mathbb R^N$, define, for $m\in\mathbb N_{\geq1}$,
\begin{equation}
F_i(m,W) = \sum_{n=1}^{m-1}\gamma^{n-1}g_i(n) + \gamma^{m-1}\lambda + \gamma^m\sum_{k=1}^{N}(\bm P^m)_{ik}W_k,
\label{eq:reliable_F_finite}
\end{equation}
and
\[
F_i(\infty,W) = \sum_{n=1}^{\infty}\gamma^{n-1}g_i(n).
\]
The infinite-waiting cost is finite because $g_i(n)\leq n$ and $\gamma\in(0,1)$. Define the Bellman operator $\mathcal T:\mathbb R^N\to\mathbb R^N$ by
\begin{equation}
(\mathcal T W)_i =
\min_{m\in\mathbb N_{\geq1}\cup\{\infty\}} F_i(m,W).
\label{eq:reliable_operator}
\end{equation}
For any $W,W'\in\mathbb R^N$,
\begin{align}
|(\mathcal T W)_i-(\mathcal T W')_i|& \leq
\sup_{m\geq1} \gamma^m \left| \sum_{k=1}^{N}(\bm P^m)_{ik}(W_k-W'_k)\right| \\& \leq
\gamma\|W-W'\|_\infty .
\label{eq:reliable_contraction}
\end{align}
Thus, $\mathcal T$ is a contraction with modulus $\gamma$. By the Banach fixed point theorem, there exists a unique fixed point $W^*$ satisfying
\begin{equation}
W^* = \mathcal T W^*.
\label{eq:reliable_fixed_point}
\end{equation}
For each $i\in\mathcal X$, choose
\begin{equation}
m_i^* \in \arg\min_{m\in\mathbb N_{\geq1}\cup\{\infty\}} F_i(m,W^*).
\label{eq:reliable_m_star}
\end{equation}
The minimum is well-defined, because $F_i(m,W^*)\to F_i(\infty,W^*)$ as $m\to\infty$, and the point $m=\infty$ is included as a feasible waiting time.

The policy induced by $\bm m^*$ remains idle at time steps $n<m_i^*$ after observing $i$ and transmits at time step $n=m_i^*$. By construction, its value satisfies \eqref{eq:reliable_fixed_point}, and the explicit form \eqref{eq:reliable_F_finite} is exactly the Bellman recursion \eqref{eq:reliable_bellman} along one regeneration cycle. Therefore, the induced policy reaches the optimal value from every reduced state.\qed

\section{Proof of Theorem \ref{thm:persistent_policy_gap_bound}}\label{app:persistent_policy_gap_bound}
The proof is structured in two steps. Decompose the upper bound as $B_{\bm\nu,K,T}(\bm m) = U_{\bm\nu, K}(\bm m) - L_{\bm\nu,T}$. First, we prove that $U_{\bm\nu, K}(\bm m)$ is an upper bound on the normalized cost of the persistent policy. The argument uses the regenerative structure induced by successful transmissions. Starting from a state observed after a successful update, we separate the cost accumulated before the next success from the discounted value of the system after the next success; this gives a linear regenerative equation. The only non-computable term is the infinite tail of the cycle cost: we handle this term by fixing a cutoff and by bounding the AoII cost after the cutoff with the inequality $g_i(n)\leq n$ from Theorem \ref{thm:exact_reduction}. Second, we prove that $L_{\bm\nu,T}$ is a lower bound on the normalized cost of the optimal policy.  We derive it through a regenerative relaxation, which is independent of the persistent table $\bm m$ and can be used as a benchmark for other policies as well. To prove this result, we relate the optimal post-regeneration value to a relaxed problem with value $V_{\min}^*$; then, we show that the resulting relaxation reaches its minimum at a deterministic action sequence. The difference between the two quantities thus obtained gives the desired upper bound. 

\subsection{Upper Bound on the Persistent-Policy Cost}
First, we prove that the cost of the persistent policy is bounded by $\bar J_{\bm\nu}^{\rm pers}(\bm m) \leq (1-\gamma)\bm\nu^\top \bigl(I-\bm\Gamma(\bm m)\bigr)^{-1} \widetilde{\bm C}_K(\bm m)$. For a cutoff $K\geq1$, define the tail factor
\begin{equation}
R_K :=
\frac{\lambda+\eta(K+1)}{1-\beta} + \frac{\eta\beta}{(1-\beta)^2}.
\label{eq:persistent_tail_factor}
\end{equation}
We decompose the policy cost into regeneration cycles. Starting from a cycle whose latest successful observation is $i$, the discounted weight with which the next cycle starts from state $k$ is
\begin{align}
\Gamma_{ik}(\bm m)
&=\sum_{n=m_i}^{\infty} \gamma^n s\eta^{n-m_i}(\bm P^n)_{ik}\nonumber\\[-1mm]
&=s\gamma^{m_i} e_i\bm P^{m_i}(I-\beta\bm P)^{-1}e_k^\top,
\end{align}
which agrees with \eqref{eq:persistent_regeneration_matrix}. The inverse exists because $\beta=\gamma\eta<1$. In addition,
\[
\sum_{k=1}^N\Gamma_{ik}(\bm m) =
\frac{\gamma^{m_i}s}{1-\gamma\eta} \leq\frac{\gamma s}{1-\gamma\eta}<1.
\]
Thus $\|\bm\Gamma(\bm m)\|_\infty<1$, so $I-\bm\Gamma(\bm m)$ is invertible and
\[
(I-\bm\Gamma(\bm m))^{-1} =
\sum_{r=0}^{\infty}(\bm\Gamma(\bm m))^r
\]
is nonnegative.

The exact cost accumulated before the next success is
\begin{align}
C_i(\bm m):={}&
\sum_{n=1}^{m_i-1}\gamma^{n-1}g_i(n)\nonumber\\[-1mm]
&+\sum_{n=m_i}^{\infty} \gamma^{n-1}\eta^{n-m_i} \bigl(\lambda+\eta g_i(n)\bigr), \quad i\in\mathcal X.
\label{eq:persistent_cycle_cost_exact}
\end{align}
The first sum is the idle portion of the cycle. In the second sum, all attempts before time slot $n$ have failed; the sampling cost is paid in time slot $n$, while the AoII cost is paid only if that attempt also fails.

To obtain a computable expression, split the second sum at $K$ and denote its tail by
\[
T_{i,K} :=
\sum_{n=K+1}^{\infty} \gamma^{n-1}\eta^{n-m_i} \bigl(\lambda+\eta g_i(n)\bigr).
\]
Since $g_i(n)\leq n$ and $\beta=\gamma\eta$, this tail satisfies
\begin{align*}
T_{i,K} &\leq\gamma^{m_i-1}\beta^{K+1-m_i} \sum_{r=0}^{\infty}\beta^r \bigl(\lambda+\eta(K+1+r)\bigr)\\&
= \gamma^{m_i-1}\beta^{K+1-m_i}R_K.
\end{align*}
Combining the finite part of the cycle cost with the previous tail estimate gives $\widetilde C_{i,K}(\bm m)$ as defined in \eqref{eq:persistent_cycle_cost_cutoff}, for which $C_i(\bm m)\leq\widetilde C_{i,K}(\bm m)$. Let $\bm C(\bm m)$ and $\widetilde{\bm C}_K(\bm m)$ collect the exact and upper cycle costs, respectively.

The regenerative value equation is
\begin{equation}
\bm V(\bm m) = \bm C(\bm m)+\bm\Gamma(\bm m)\bm V(\bm m).
\label{eq:persistent_linear_system_exact}
\end{equation}
Consequently,
\[
\bm V(\bm m) =
(I-\bm\Gamma(\bm m))^{-1}\bm C(\bm m) \leq (I-\bm\Gamma(\bm m))^{-1}\widetilde{\bm C}_K(\bm m).
\]
Multiplying by $\bm\nu^\top$ and by $1-\gamma$ gives the desired upper bound.\qed

\subsection{Lower Bound on the Optimal-Policy Cost}
Next, we prove that the cost of the optimal policy is bounded by $\bar J_{\bm\nu}^{*}
\geq (1-\gamma)\sum_{i=1}^{N}\nu_i \min_{a\in\mathcal A_T} \left\{C_{i,T}(a)+\Gamma_T(a)v_T\right\}$. We need two intermediate results:

\subsubsection{Regenerative relaxation} For every finite horizon $T\geq1$ and every state $i\in\mathcal X$, the optimal value satisfies
\begin{equation}
V_i^*\geq
\inf_{u\in[0,1]^T}
\left\{C_{i,T}(u) + \Gamma_T(u)V_{\min}^*\right\}.
\label{eq:finite_regenerative_relaxation_bound}
\end{equation}

To prove this, fix an admissible policy and suppose that the latest successful transmission observed state $i$. We examine the process until the next successful transmission or until time slot $T$, whichever occurs first.

Let $u_n\in[0,1]$ be the conditional probability that the policy requests a transmission in time slot $n$, given that no success has occurred in the preceding time slots. For $u=(u_1,\dots,u_T)$, the recursion in \eqref{eq:no_regeneration_probability} equivalently gives
\[
Q_n(u)=\prod_{r=1}^{n}(1-su_r), \qquad n=1,\dots,T.
\]
Since the reduced state is $(i,n)$ whenever the cycle reaches time slot $n$, the discounted transmission and AoII costs accumulated up to the cutoff are $C_{i,T}(u)$ as defined in \eqref{eq:finite_relaxed_block_cost}. Indeed, the sampling cost in time slot $n$ is incurred on the event that the cycle has survived through time slot $n-1$, while the AoII cost $g_i(n)$ is incurred only if the cycle also survives time slot $n$.

If a successful transmission occurs in time slot $n\leq T$, the future optimal value is at least $V_{\min}^*$. On the event of no success by the cutoff, we relax the problem by granting a free regeneration at time slot $T$. This can only reduce the continuation cost. The corresponding discounted regeneration weight is $\Gamma_T(u)$ as defined in \eqref{eq:finite_regeneration_prob}. It satisfies $0\leq\Gamma_T(u)\leq\gamma<1$. Therefore every admissible policy induces a vector $u\in[0,1]^T$ such that
\[
V_i^*\geq C_{i,T}(u)+\Gamma_T(u)V_{\min}^*.
\]
To conclude the proof, take the infimum over $u$.\qed

\subsubsection{Vertex reduction} Let $\mathcal A_T:=\{0,1\}^T$. For every $i\in\mathcal X$, every $z\geq0$, and every $T\geq1$,
\begin{equation}
\label{eq:vertex_reduction}
\inf_{u\in[0,1]^T} \left\{C_{i,T}(u) + \Gamma_T(u)z\right\} =
\min_{a\in\mathcal A_T}\left\{C_{i,T}(a)+\Gamma_T(a)z\right\}
\end{equation}
and
\begin{equation}
\label{eq:min_inf}
\inf_{u\in[0,1]^T} \frac{C_{i,T}(u)}{1-\Gamma_T(u)} =
\min_{a\in\mathcal A_T}\frac{C_{i,T}(a)}{1-\Gamma_T(a)}.
\end{equation}

To prove this, fix all components of $u$ except $u_r$. Since
\[
Q_n(u)=\prod_{h=1}^{n}(1-su_h),
\]
each $Q_n(u)$ is affine in $u_r$ when the other components are fixed. Hence both $C_{i,T}(u)$ and $\Gamma_T(u)$ are affine in $u_r$, and so is $C_{i,T}(u)+\Gamma_T(u)z$. Its minimum over $u_r\in[0,1]$ is therefore reached at an endpoint. Applying this argument successively to all coordinates shows that the minimum over $[0,1]^T$ is reached at a vertex of the hypercube, namely at some $a\in\mathcal A_T$.

For the ratio, the numerator and denominator are affine in $u_r$, and the denominator is strictly positive because $\Gamma_T(u)\leq\gamma<1$. The ratio is therefore linear in $u_r$ and has a derivative of constant sign wherever defined. Its minimum is again reached at an endpoint. Repeating the coordinatewise argument proves the second claim.\qed

We can now prove the desired lower bound. For each post-regeneration state $i$, define
\[
w_{i,T} :=
\min_{a\in\mathcal A_T} \left\{C_{i,T}(a)+\Gamma_T(a)v_T\right\},
\]
where $v_T$ is defined in \eqref{eq:v_regenerative}, and set
\begin{equation}
L_{\bm\nu,T} := (1-\gamma)\sum_{i=1}^{N}\nu_iw_{i,T}.
\label{eq:regenerative_lower_bound_def}
\end{equation}

Let $V_{\min}^*:=\min_{i\in\mathcal X}V_i^*$. By \eqref{eq:finite_regenerative_relaxation_bound} and \eqref{eq:vertex_reduction},
\[
V_i^*\geq
\min_{a\in\mathcal A_T} \left\{C_{i,T}(a)+\Gamma_T(a)V_{\min}^*\right\}
\]
for every $i$. Choose $i_0$ such that $V_{i_0}^*=V_{\min}^*$ and let $a^*$ reach the minimum for that state. Then
\[
V_{\min}^*\geq
C_{i_0,T}(a^*)+\Gamma_T(a^*)V_{\min}^*,
\]
which, since $\Gamma_T(a^*)<1$, implies
\[
V_{\min}^*\geq
\frac{C_{i_0,T}(a^*)}{1-\Gamma_T(a^*)} \geq v_T.
\]
Using $V_{\min}^*\geq v_T$ and $\Gamma_T(a)\geq0$ in the relaxation gives $V_i^*\geq w_{i,T}$. Therefore,
\[
\bar J_{\bm\nu}^{*} =
(1-\gamma)\sum_{i=1}^{N}\nu_iV_i^*\geq
(1-\gamma)\sum_{i=1}^{N}\nu_iw_{i,T} =
L_{\bm\nu,T}.
\]\qed

\subsection{Suboptimality Performance Gap}
We now combine the two sides of the bound. The quantity $U_{\bm\nu,K}(\bm m)$ is an upper bound on the normalized discounted cost of the persistent policy, and $L_{\bm\nu,T}$ is a lower bound on the normalized discounted cost of the optimal policy. Therefore, their difference immediately gives our upper bound on the performance gap between the two policies.\qed

\section{Proof of Lemma \ref{lem:map_stabilization}}
\label{app:map_stabilization}
Since $\bm P$ is ergodic by assumption,
\[
e_i\bm P^n\longrightarrow \bm\omega, \qquad i\in\mathcal X.
\]
Let
\[
\delta_\omega :=
\omega_{x_\omega} - \max_{j\neq x_\omega}\omega_j.
\]
Since $x_\omega$ is the unique mode of $\bm\omega$, we have $\delta_\omega>0$.

Because the state space $\mathcal X$ is finite, there exists a finite $n_0$ such that, for every $i\in\mathcal X$ and every $n\geq n_0$,
\[
\left\|e_i\bm P^n-\bm\omega\right\|_\infty < \frac{\delta_\omega}{3}.
\]
Therefore, for every $j\neq x_\omega$,
\begin{align*}
(e_i\bm P^n)_{x_\omega} - (e_i\bm P^n)_j & >
\omega_{x_\omega} - \omega_j - \frac{2\delta_\omega}{3}\\&
\geq \frac{\delta_\omega}{3} > 0.
\end{align*}
Thus, $x_\omega$ is the unique MAP estimate for every initial state $i$ and every $n\geq n_0$, and the set in the definition of $\kappa_{\rm MAP}$ is nonempty and $\kappa_{\rm MAP}\leq n_0-1<\infty$.\qed

\section{Proof of Lemma \ref{lem:hybrid_active}}
\label{app:hybrid_active}

We first show that $U^{\rm AT}$ uniformly bounds $V_i^E(1)$ for both estimators. Let $M^{{\rm AT},E}$ be the maximum cost of the always-transmit policy over the initial states $(i,1)$. Since the optimal value cannot exceed the cost of this policy,
\[
\max_iV_i^E(1)\leq M^{{\rm AT},E}.
\]
Using $g_i^E(n)\leq n$, the cost accumulated before the next success and the discounted regeneration weight are bounded by
\[
C^{\rm AT} =
\frac{\lambda}{1-\beta} + \frac{\eta}{(1-\beta)^2}, \qquad \Gamma^{\rm AT}=\frac{\gamma s}{1-\beta}.
\]
Therefore,
\[
M^{{\rm AT},E} \leq C^{\rm AT}+\Gamma^{\rm AT}M^{{\rm AT},E},
\]
and hence
\[
\max_iV_i^E(1)\leq
\frac{C^{\rm AT}}{1-\Gamma^{\rm AT}} = U^{\rm AT}.
\]

Suppose now that the passive action is optimal at $(i,n)$. Comparing the passive and active branches of \eqref{eq:hybrid_reduced_bellman} yields
\[
\begin{aligned}
g_i^E(n)+\gamma V_i^E(n+1)\leq{}&
\lambda+\eta g_i^E(n) + \gamma\eta V_i^E(n+1)\\[-1mm]&
+ \gamma s\sum_{k=1}^{N}(\bm P^n)_{ik}V_k^E(1).
\end{aligned}
\]
Since $s>0$, this implies
\[
g_i^E(n)+\gamma V_i^E(n+1) \leq
\frac{\lambda}{s} + \gamma\sum_{k=1}^{N}(\bm P^n)_{ik}V_k^E(1) \leq
\frac{\lambda}{s}+\gamma U^{\rm AT}.
\]
The value function is nonnegative, so a necessary condition for passivity to be optimal is
\[
g_i^E(n)\leq\frac{\lambda}{s}+\gamma U^{\rm AT}.
\]
Condition \eqref{eq:hybrid_forced_active_condition} violates this inequality; therefore the active action must be optimal.\qed

\section{Proof of Theorem \ref{thm:hybrid_map_bound}}
\label{app:hybrid_map_bound}
Fix $R\geq\kappa$. We proceed in three steps.

\subsubsection{Regenerative discrepancy bound under a common policy}
Fix $E\in\{{\rm map},{\rm hyb}\}$ and a deterministic stationary policy $\phi$ whose first $R$ actions in each cycle are represented by a table $\bm u\in\mathcal U_R^E$, that is,
\[
u_i(n)=\phi(i,n),\qquad i\in\mathcal X,\quad n=1,\ldots,R.
\]
In particular, by Lemma~\ref{lem:hybrid_active}, this property is satisfied by an optimal Bellman selector for estimator $E$.

Apply the same policy $\phi$ to the MAP and hybrid models and couple their processes by using the same transition outcomes. This coupling is possible because the estimator changes only the AoII cost and does not affect the transition law. Consequently, the two models visit the same states and every successful transmission occurs in the same slot and reveals the same state.

Starting from the post-regeneration state $(i,1)$, define the total discounted absolute stage-cost discrepancy under $\phi$ as
\begin{equation}
z_i^{\phi} :=
\mathbb E_i^{\phi}\!\left[ \sum_{t=0}^{\infty}\gamma^t
 (1-sA_t)\varepsilon_{I_t}^{\kappa}(N_t) \right],
\label{eq:hybrid_common_policy_discrepancy}
\end{equation}
where $(I_t,N_t)$ is the common reduced-state process and $A_t=\phi(I_t,N_t)$. Since the transmission cost is identical in the two models, the triangle inequality gives
\begin{equation}
\left|J_i^{{\rm map},\phi}(1)-J_i^{{\rm hyb},\phi}(1)\right| \leq z_i^{\phi}.
\label{eq:hybrid_common_policy_value_difference}
\end{equation}
Moreover, the two estimators coincide during the first $\kappa$ slots of every cycle, and hence
\[
\varepsilon_i^{\kappa}(n)=0,\qquad n\leq\kappa.
\]

We now decompose $z_i^{\phi}$ at the first successful transmission in the current cycle. The probability that the cycle continues until time slot $n$ is $Q_i^{\bm u}(n)$. Therefore, the expected discounted discrepancy accumulated during the first $R$ slots is
\[
\sum_{n=1}^{R}\gamma^{n-1}Q_i^{\bm u}(n) \varepsilon_i^{\kappa}(n).
\]
If no successful transmission occurs by the end of slot $R$, then at slot $(R+1+r)$ after the beginning of the current cycle the elapsed no-success duration is at most $R+1+r$. Since $0\leq g_j^{\rm map}(\ell),g_j^{\rm hyb}(\ell)\leq\ell$, the absolute discrepancy at that slot is at most $R+1+r$. Thus, on the event that the cycle survives through slot $R$, the entire remaining discounted discrepancy, including all possible later regeneration cycles, is bounded by
\[
\sum_{r=0}^{\infty}\gamma^r(R+1+r) =
\frac{R+1}{1-\gamma}+\frac{\gamma}{(1-\gamma)^2}.
\]
This gives exactly the tail term in $[\bm c_{\kappa,R}(\bm u)]_i$.

On the other hand, if the first successful transmission occurs in slot $n\leq R$ and reveals state $k$, the two models regenerate simultaneously at $(k,1)$, and the remaining discrepancy is $z_k^{\phi}$. The corresponding discounted transition weight is
\[
\gamma^n s u_i(n)Q_i^{\bm u}(n-1)(\bm P^n)_{ik}.
\]
Summing over all $n\leq R$ yields $[\bm\Theta_R(\bm u)]_{ik}$. Hence, with $\bm z^{\phi}:=(z_1^{\phi},\ldots,z_N^{\phi})^\top$, the decomposition gives the regenerative inequality
\begin{equation}
\bm z^{\phi} \leq \bm c_{\kappa,R}(\bm u)
+ \bm\Theta_R(\bm u)\bm z^{\phi}.
\label{eq:hybrid_regenerative_discrepancy_inequality}
\end{equation}

For every $i$,
\begin{align*}
\sum_{k=1}^{N}[\bm\Theta_R(\bm u)]_{ik}
&=
\sum_{n=1}^{R}\gamma^n s u_i(n)Q_i^{\bm u}(n-1)\\
&\leq
\gamma\sum_{n=1}^{R}s u_i(n)Q_i^{\bm u}(n-1)
\leq\gamma<1.
\end{align*}
Therefore, $\|\bm\Theta_R(\bm u)\|_\infty\leq\gamma<1$, so $I-\bm\Theta_R(\bm u)$ is invertible and
\[
(I-\bm\Theta_R(\bm u))^{-1}
=
\sum_{r=0}^{\infty}\bm\Theta_R(\bm u)^r
\]
is nonnegative. From \eqref{eq:hybrid_regenerative_discrepancy_inequality},
\begin{equation}
\bm z^{\phi}
\leq
(I-\bm\Theta_R(\bm u))^{-1}
\bm c_{\kappa,R}(\bm u).
\label{eq:hybrid_common_policy_certificate}
\end{equation}
Combining \eqref{eq:hybrid_common_policy_value_difference} and
\eqref{eq:hybrid_common_policy_certificate} gives a discrepancy certificate for any policy that is in common between the two estimators.

\subsubsection{Comparison of the two optimal post-regeneration values}
Let $\phi_{\rm hyb}^*$ be an optimal deterministic policy for the hybrid model, and let $\bm u_{\rm hyb}^*$ be its first-$R$ action table. By Lemma~\ref{lem:hybrid_active}, $\bm u_{\rm hyb}^*\in\mathcal U_R^{\rm hyb}$. Applying the previous bound to the common policy $\phi_{\rm hyb}^*$ and then maximizing over $\mathcal U_R^{\rm hyb}$ gives
\[
\begin{aligned}
V_i^{\rm map}(1)
&\leq J_i^{{\rm map},\phi_{\rm hyb}^*}(1)\\
&\leq J_i^{{\rm hyb},\phi_{\rm hyb}^*}(1)
+B_{\kappa,R}^{\rm hyb}(i)\\
&=V_i^{\rm hyb}(1)+B_{\kappa,R}^{\rm hyb}(i).
\end{aligned}
\]
Similarly, using an optimal MAP policy $\phi_{\rm map}^*$ as the common policy in the two models yields
\[
V_i^{\rm hyb}(1)
\leq V_i^{\rm map}(1)+B_{\kappa,R}^{\rm map}(i).
\]
Therefore,
\begin{equation}
\left|V_i^{\rm map}(1)-V_i^{\rm hyb}(1)\right|
\leq
\max\left\{B_{\kappa,R}^{\rm map}(i),
B_{\kappa,R}^{\rm hyb}(i)\right\}.
\label{eq:hybrid_post_regeneration_bound_proof}
\end{equation}

\subsubsection{Synchronized initial belief}
At $b_0$, the passive action has zero cost and leads to $(i_0,1)$ after the source transition, while the active action provides the same information and additionally pays $\lambda$. Hence, the passive action must be optimal for both estimators and
\[
V_{\rm map}^*(b_0)=\gamma V_{i_0}^{\rm map}(1),
\qquad
V_{{\rm hyb},\kappa}^*(b_0)=\gamma V_{i_0}^{\rm hyb}(1).
\]
To conclude the proof, multiply \eqref{eq:hybrid_post_regeneration_bound_proof} by $\gamma$, set $i=i_0$, and take the infimum over $R\geq\kappa$.\qed

\section{Proof of Theorem \ref{thm:indexability}}
\label{app:indexability}
We omit the source index $j$ to simplify the notation. Fix $W'>W$. Since only the passive action depends explicitly on the subsidy, increasing the subsidy can reduce the one-stage cost by at most $W'-W$. Therefore, for every reduced state $(i,n)$,
\begin{equation}
0\leq V_i(n,W)-V_i(n,W')
\leq\frac{W'-W}{1-\gamma}.
\label{eq:value_subsidy_lipschitz}
\end{equation}
Using \eqref{eq:relaxed_single_arm_Q0}--\eqref{eq:relaxed_single_arm_Q1},
\begin{align}
\Psi_i(n,W)
={}&-s g_i(n)+W\nonumber\\[-1mm]
&+\gamma s\left(
\sum_{k=1}^{N}(\bm P^n)_{ik}V_k(1,W)
-V_i(n+1,W)
\right).
\label{eq:whittle_gap_expanded}
\end{align}
For every state, define
\[
\zeta_i(n):=V_i(n,W)-V_i(n,W').
\]
By \eqref{eq:value_subsidy_lipschitz},
\[
0\leq\zeta_i(n)\leq\frac{W'-W}{1-\gamma}.
\]
It follows that
\begin{align}
\Psi_i(n,W') - \Psi_i(n,W) ={}&
(W'-W)+\gamma s\zeta_i(n+1)\nonumber\\[-1mm]&
-\gamma s\sum_{k=1}^{N}(\bm P^n)_{ik}\zeta_k(1)\nonumber\\
\geq{}&(W'-W)\left(1-\frac{\gamma s}{1-\gamma}\right).
\label{eq:whittle_gap_monotonicity_bound}
\end{align}
Under \eqref{eq:multi_source_indexability_condition}, the last expression is nonnegative. Hence, $\Psi_i(n,W')\geq\Psi_i(n,W)$ for every state. If $(i,n)$ is passive at subsidy $W$, it remains passive at every larger subsidy, and the arm is indexable.\qed

\end{document}